\title[dynamical sampling for burst-like  forcing terms]{Predictive algorithms in dynamical sampling for burst-like  forcing terms }
\author[ ]{}
\date{\vspace{-5ex}}
\newtheorem{theorem}{Theorem}[section]
\newtheorem{corollary}[theorem]{Corollary}
\newtheorem{remark}[theorem]{Remark}
\newtheorem{example}{Example}
\newtheorem{problem}{Problem}
\newtheorem{assume}{Assumption}
\newcommand{\HH}{\mathcal{H}}
\newcommand{\R}{\mathbb{R}}
\newcommand{\Z}{\mathbb{Z}}
\newcommand{\N}{\mathbb{N}}
\newcommand{\G}{\mathcal{G}}
\newcommand {\Thresh} {Q(g)} 
\newcommand{\hatm}{\widehat{\mathfrak{m}}}
\newcommand {\la} {\langle}
\newcommand {\ra} {\rangle}
\definecolor{aacolor}{rgb}{0.05, 0.75, 1}
\definecolor{lxcolor}{rgb}{0.65, 0.15, 0.6}
\newcommand{\lx}[1]{\textcolor{black}{#1}}
\definecolor{kkcolor}{rgb}{0.0, 0.95, 0.05}
\definecolor{ikcolor}{rgb}{1, 0., 0.}
\newcommand{\ik}[1]{\textcolor{black}{#1}}
\begin{document}
	\date{}

	\author{Akram Aldroubi,
		Longxiu Huang, Keri Kornelson, and
		Ilya Krishtal}

\address{\textrm{(Akram Aldroubi)}
Department of Mathematics, Vanderbilt University, Nashville, TN 37240}
\email{akram.aldroubi@vanderbilt.edu}
\address{
\textrm{(Longxiu Huang)}
	Department of Mathematics,
	University of California, Los Angeles, CA 90095
	}
\email{huangl3@math.ucla.edu}

\address{\textrm{(Keri Kornelson)} Department of Mathematics, University of Oklahoma, Norman, OK 73019
}
\email{kkornelson@ou.edu}

\address{
\textrm{(Ilya Krishtal)}
	Department of Mathematical Sciences,
	Northern Illinois University, DeKalb, IL 60115
	}
\email{ikrishtal@niu.edu}

\keywords{Sampling Theory, Forcing, Frames,
Reconstruction, Semigroups, Continuous Sampling }
\subjclass [2010] {46N99, 42C15,  94O20}

\maketitle
\begin{abstract}
In this paper, we consider the problem of recovery of a burst-like  forcing term in an initial value problem (IVP) in the framework of dynamical sampling.  
We introduce an idea of using two particular classes of samplers that  allow one to predict the solution of the IVP over a time interval without a burst. This leads to two different algorithms that stably and accurately approximate  the burst-like forcing term even in the presence of a 
measurement acquisition error and a
large background source.   
\end{abstract}

\section{Introduction}
\subsection{Motivation.} 
In this paper, we present and study two algorithms that allow a machine to detect an unusual occurrence in an otherwise smooth environment. We will refer to these unusual occurrences as bursts. In reality, a burst can be an earthquake, a jumping fish, an explosion, a lie on a polygraph test, or an omission or insertion in a deep fake video. The algorithms are designed for continuous monitoring of the environment and will detect bursts one-by-one, assuming that there is a known (possibly very small) gap between any two consecutive bursts. (If more than one burst occurs within a small time interval, the algorithms will treat them as a single burst with a shape determined, roughly, as a superposition of the shapes of the real bursts.)

The first of the two algorithms utilizes discrete time samples of the environment and declares the time of the detected  burst to be  the mid-point between the times of the corresponding samples. The second algorithm uses specific weighted averages (Fourier coefficients) of continuous time samples and establishes the time of the burst more accurately. As for the shape of the burst, both algorithms recover it with an error controlled by a measure of smoothness of the environment, measurement acquisition error, and the time step of the algorithm chosen by the user.

The key feature of both presented algorithms is the predictive nature of the design of the samples. We assume that the monitored environment evolves under the action of a known physical process. This allows us to use the generator of the process or the operator semigroup describing it in modeling the sampling devices. Then the measurements made at a previous time step can be used to estimate the current measurements under the assumption that no burst occurred between them. Thus, comparing the estimated current measurements with the actual current measurements, one can reasonably accurately determine if a burst occurred or not.   

This work is motivated by the problem of isolating localized source terms  considered in \cite {MBD15,MBD17}.  Some real world inspiration was provided by the problems of identifying jumping fish from a video recording of the surface of a pond and transcribing the score of a musical piece from the spectrogram of its performance.

 \subsection{Problem setting}
We consider the problem of recovering the ``burst-like'' portion $f=f(x,t)$ of an unknown source term $F = f +\eta$ from space-time samples   
of a function $u=u(x,t)$ that evolves in time due to the action of a known evolution operator $A$ and the forcing function $F$. The variable $x \in \R^d$  is the ``spatial''  variable, while $t\in\R_+$ represents time. For each fixed $t$, $u(\cdot,t)$ can be  viewed as a vector $u(t)$ in a Hilbert space $\HH$ of functions on a subset of $\R^d$. With this identification we get the following abstract initial value problem:
\begin{equation}\label{DFM}
	\begin{cases}
	\dot{u}(t)=Au(t)+F(t)\\
	u(0)=u_0,
	\end{cases}
	\quad t\in\mathbb R_+,\ u_0\in\HH.
\end{equation} 
Above  $\dot{u}: \R_+\to\HH$ is the time derivative of $u$, $F:\mathbb{R}_+\rightarrow\HH$ is  a forcing (or source) term, and $A: D(A)\subseteq \HH\to\HH$ is a generator of a strongly continuous semigroup $T$.  

A prototypical example arises when  $A=\Delta$ is the Laplacian operator on Euclidean space. For this case, $F$ represents the unknown ``heat source'' a  portion of which we seek  to  recover from the space-time samples of the temperature $u$. 

In this paper, we only consider sources of the form $F = f+\eta$, where $\eta$ is a Lipschitz continuous background source and $f$ is a  ``burst-like'' forcing term given by
\begin{equation}\label{STFC}
	f(t)=\sum_{j=1}^{N}f_j\delta(t-t_j),
\end{equation}
for some unknown $N\in \N$, with $0<t_1\ldots< t_N$ 
and $f_j\in V$.  Here, the set $V$ is a 
 subset of $\HH$ and $\delta$ is the Dirac delta-function.  {We call each $t_j$ the \textit{time} of burst $j$ and $f_j$ the \textit{shape} of the burst.}  


With the notation given above, the problem studied in this paper can be stated as follows. 
\begin{problem}\label {Prob}
 Design a (finite or countable) set of samplers $\G\subseteq \HH$ and an algorithm that allow one to stably and accurately approximate any $f$ of the form \eqref {STFC} from the samples obtained from values of the measurement function $\mathfrak m$ given by  
 \begin{equation}\label{measurements}
	\mathfrak m(t,g) = \left\langle u(t),g \right\rangle +\nu(t,g),\ t\ge 0,\ g\in\G,
\end{equation}
where $\nu$ is the measurement acquisition noise. 
\end{problem}

In our recent papers (see, e.g., \cite{ACCMP17, ADK13, ADK15, AGHJKR21, AHKLLV18} and the references therein)  we considered the problem of recovering  a vector $f\in \HH$ from the samples $\{\left\langle A^n f, g\right\rangle: g\in\G,\, n = 0, 1, \ldots, L\}$ and its continuous time analog \cite{aldroubi2019frames}. We wish to utilize some of those ideas for Problem \ref {Prob}, thus putting it into the general framework of dynamical sampling. 
But first, we would like to make a few observations and assumptions that will let us explain the contributions of this paper more clearly.

The algorithms we design find the burst times $t_j$ and shapes $f_j$ one at a time from a stream of measurements. These algorithms rely on a known fixed minimal separation $\gamma$ between the time of each burst. The value $\gamma$ provides us with an upper bound on the time step $\beta$ {between samples} in the data stream of the algorithms.

\begin{assume}\label{as4}
There is a known $\gamma > 0$ such that $t_{j+1}-t_j > \gamma$  for all $j = 1, \ldots, N-1$. The time step $\beta$ is assumed to satisfy $\beta < \frac\gamma 3$ for Algorithm \ref{alg:direct} and $\beta < \frac\gamma 6$ for Algorithm \ref{alg2}.
\end{assume}


Hereinafter,  we denote  by $T$ the semigroup generated by the operator $A$ in \eqref{DFM}. The first part of Problem \ref{Prob} is the design of the set of samplers $\G$. The following example illustrates some difficulties one could face with $\G$.   The example uses properties of semigroups found in Section \ref{tool}.   

\begin{example}\label{ex:tiunknown} 
It may happen that one cannot uniquely determine a source term of the form   \eqref{STFC}.  In this example, let $T$ be the semigroup of translations acting on $\mathcal H = L^2(\mathbb{R})$, i.e. $[T(t)f](x)=f(x-t)$.  

Let $f_1 =\chi_{[0,1]}-\frac12 \chi_{[2,3]}$ and $\tilde{f}_1 = \chi_{[2,3]}$ in $L^2(\mathbb{R})$. We will consider the two burst-like source terms $f= f_1\delta(\cdot)$ and $\tilde{f}=\tilde{f}_1\delta(\cdot-2)$.  We create initial value problems in the form of \eqref{DFM} in the simplest case, where $u_0 = 0$ and there is no background source.  Looking ahead to Equation \eqref{solburst}, these problems have solutions $u$ and $\tilde{u}$, respectively, where

$$u(t) = u(\cdot, t) = T(t) f_1  \quad \mbox{ and}  \quad \tilde{u}(t) = \tilde{u}(\cdot,t) = 
\chi_{[2,\infty)}(t)T(t-2)\tilde{f}_1, \quad  t \geq 0.
$$

Let  $g=\chi_{[1,2]}+2\chi_{[3,4]}$ be a single measurement function. The measurements of the  $u$ and $\tilde{u}$ against $g$ are, respectively, $\langle u(t), g \rangle$ and $\langle \tilde{u}(t),g \rangle$, for all $t\ge 0$.   We see, however, that these measurements of $u$ and $\tilde{u}$ match for all $t\geq 0$:

$$ \langle u(t), g\rangle = \langle \tilde{u}(t), g \rangle = 
\begin{cases} 0, & t \leq 2; \\ 2t-4, & 2 < t\leq 3; \\ 8-2t, & 3 < t < 4; \\ 0, & t \geq 4.    \end{cases} 
$$
Thus, we cannot distinguish between solutions coming from the distinct forcing terms $f_1$ and $\tilde{f}_1$ using $g$ alone for the measurements. Figure \ref{fig:example} illustrates $u(t), \tilde{u}(t), \textrm{ and } g$ for selected values of $t$.

\begin{figure}[h!]
\begin{tikzpicture}
\draw[-stealth, black, thick] (0,-1.5) -- (0,0.5);
\node at (0.8,0.8) {\small \; $t=0$};

\filldraw[color=red,fill=red!5,thick](1.5,-1)rectangle(2,0);
\filldraw[color=red,fill=red!5,thick](.5,-1)rectangle(1,-.5);
\filldraw[color=blue,fill=blue!5,thick](0,-1)rectangle(.5,-.5);
\filldraw[color=blue,fill=blue!5,thick](1,-1)rectangle(1.5,-1.25);
\draw[-stealth, black, thick] (-0.5,-1) -- (2.2,-1);
\node at (1.75, 0.25){\Small \textcolor{red}{$g$}};
\node at (.3,-0.25){\Small \textcolor{blue}{$u$}};
\node at (2.4, -1){\Small $x$};

\begin{scope}[shift={(3.5,0)}];
\draw[-stealth, black, thick] (0,-1.5) -- (0,0.5);
\node at (0.8,0.8) {\small \; $t=0.5$};
\filldraw[color=red,fill=red!5,thick](1.5,-1)rectangle(2,0);
\filldraw[color=red,fill=red!5,thick](.5,-1)rectangle(1,-.5);

\filldraw[color=blue,fill=blue!5,thick](0.25,-1)rectangle(.5,-.5);
\filldraw[color=purple,fill=purple!10,thick](0.5,-1)rectangle(0.75,-.5);
\filldraw[color=blue,fill=blue!5,thick](1.25,-1)rectangle(1.75,-1.25);
\draw[-stealth, black, thick] (-0.5,-1) -- (2.2,-1);
\node at (1.75, 0.25){\Small \textcolor{red}{${g}$}};
\node at (.4,-0.25){\Small \textcolor{blue}{$u$}};
\node at (2.4, -1){\Small $x$};
\end{scope}

\begin{scope}[shift={(7,0)}]
\draw[-stealth, black, thick] (0,-1.5) -- (0,0.5);
\node at (0.8,0.8) {\small \; $t=2$};
\filldraw[color=red,fill=red!5,thick](1.5,-1)rectangle(2,0);
\filldraw[color=red,fill=red!5,thick](.5,-1)rectangle(1,-.5);

\filldraw[color=blue,fill=blue!5,thick](1.0,-1)rectangle(1.5,-.5);
\filldraw[color=blue,fill=blue!5,thick](2,-1)rectangle(2.5,-1.25);
\draw[-stealth, black, thick] (-0.5,-1) -- (2.7,-1);
\node at (1.75, 0.25){\Small \textcolor{red}{${g}$}};
\node at (1.25,-0.25){\Small \textcolor{blue}{$u$}};
\node at (2.9, -1){\Small $x$};

\end{scope}

\begin{scope}[shift={(11,0)}]
\draw[-stealth, black, thick] (0,-1.5) -- (0,0.5);
\node at (0.8,0.8) {\small \; $t=2.5$};
\filldraw[color=red,fill=red!5,thick](1.5,-1)rectangle(2,0);
\filldraw[color=red,fill=red!5,thick](.5,-1)rectangle(1,-.5);

\filldraw[color=blue,fill=blue!5,thick](1.25,-1)rectangle(1.5,-.5);
\filldraw[color=purple,fill=purple!10,thick](1.5,-1)rectangle(1.75,-.5);
\filldraw[color=blue,fill=blue!5,thick](2.25,-1)rectangle(2.75,-1.25);
\draw[-stealth, black, thick] (-0.5,-1) -- (3.0,-1);

\node at (1.75, 0.25){\Small \textcolor{red}{${g}$}};
\node at (1.35,-0.25){\Small \textcolor{blue}{$u$}};
\node at (3.2, -1){\Small $x$};

\end{scope}

\begin{scope}[shift={(0,-3)}]
\draw[-stealth, black, thick] (0,-1.5) -- (0,0.5);
\node at (0.8,0.8) {\small \; $t=0$};

\filldraw[color=red,fill=red!5,thick](1.5,-1)rectangle(2,0);
\filldraw[color=red,fill=red!5,thick](.5,-1)rectangle(1,-.5);
\draw[-stealth, black, thick] (-0.5,-1) -- (2.2,-1);
\node at (1.75, 0.25){\Small \textcolor{red}{${g}$}};

\node at (2.4, -1){\Small $x$};

\end{scope}

\begin{scope}[shift={(3.5,-3)}]
\draw[-stealth, black, thick] (0,-1.5) -- (0,0.5);
\node at (0.8,0.8) {\small \; $t=0.5$};

\filldraw[color=red,fill=red!5,thick](1.5,-1)rectangle(2,0);
\filldraw[color=red,fill=red!5,thick](.5,-1)rectangle(1,-.5);
\draw[-stealth, black, thick] (-0.5,-1) -- (2.2,-1);
\node at (1.75, 0.25){\Small \textcolor{red}{${g}$}};

\node at (2.4, -1){\Small $x$};
\end{scope}

\begin{scope}[shift={(7,-3)}]
\draw[-stealth, black, thick] (0,-1.5) -- (0,0.5);
\node at (0.8,0.8) {\small \; $t=2$};

\filldraw[color=red,fill=red!5,thick](1.5,-1)rectangle(2,0);
\filldraw[color=red,fill=red!5,thick](.5,-1)rectangle(1,-.5);

\filldraw[color=blue,fill=blue!5,thick](1,-1)rectangle(1.5,-.5);
\draw[-stealth, black, thick] (-0.5,-1) -- (2.7,-1);
\node at (1.75, 0.25){\Small \textcolor{red}{${g}$}};
\node at (1.25,-0.25){\Small \textcolor{blue}{$\tilde{u}$}};
\node at (2.9, -1){\Small $x$};

\end{scope}

\begin{scope}[shift={(11,-3)}]
\draw[-stealth, black, thick] (0,-1.5) -- (0,0.5);
\node at (0.8,0.8) {\small \; $t=2.5$};

\filldraw[color=red,fill=red!5,thick](1.5,-1)rectangle(2,0);
\filldraw[color=red,fill=red!5,thick](.5,-1)rectangle(1,-.5);

\filldraw[color=blue,fill=blue!5,thick](1.25,-1)rectangle(1.5,-.5);
\filldraw[color=purple,fill=purple!10,thick](1.5,-1)rectangle(1.75,-.5);
\draw[-stealth, black, thick] (-0.5,-1) -- (3.0,-1);

\node at (1.75, 0.25){\Small \textcolor{red}{${g}$}};
\node at (1.35,-0.25){\Small \textcolor{blue}{$\tilde{u}$}};
\node at (3.2, -1){\Small $x$};

\end{scope}
\end{tikzpicture}

\caption{\footnotesize In the figures shown here, the measurement function $g$ is shown in red.  In the first row of images, we see in blue $u(t)$ respectively at $t=0, 0.5, 2, 2.5$.  In the second row, we see $\tilde{u}(t)$ shown at the same times.  Recall that $\tilde{u}(t) = 0$ for all $t< 2$, so we don't see the function pop up until $t=2$.}\label{fig:example}
\end{figure}
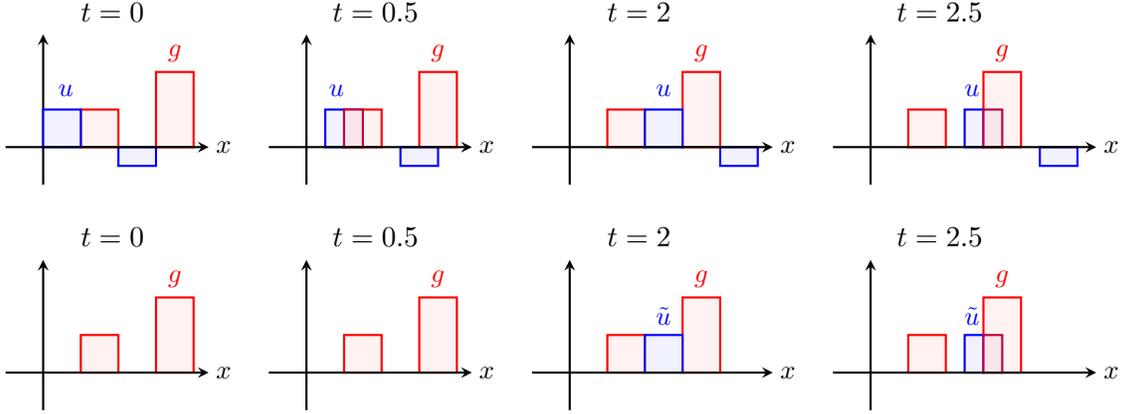

\end{example}

We break the question of sampler design into two parts. The first part is at the core of this paper and concerns the structure of the set $\G$. We express our choice of the structure in the following assumption. 

 \begin{assume}\label{as1}
For Algorithm \ref{alg:direct}, we assume that $\G = \widetilde\G \cup T^*(\beta) \widetilde\G$ for some countable (possibly, finite) set $\widetilde\G\subseteq \HH$. For Algorithm \ref{alg2}, we assume that $\G$ is of the form $\G = \widetilde\G \cup A^* \widetilde\G$ for some countable (possibly, finite) set $\widetilde\G\subseteq D(A^*)$.
\end{assume}

The second part of the sampler design question covers the need to reconstruct the shapes $f_j$ from their samples $\la f_j, g\ra$, $g\in \G$. In this paper, we address this part only superficially in the form of the following assumption. 

\begin{assume}\label{as2}
Assumption \ref{as1} holds and
the analysis map $R_{\G}: V\to \ell_\infty(\widetilde\G)$ given by
$(R_{\G}h)(g) = \left\langle h, g\right\rangle$, $h\in V$, $g\in\widetilde\G$, has a left inverse $S_\G: R_{\G}V\subseteq \ell_\infty(\widetilde\G)\to V\subseteq \mathcal H$, which is Lipschitz with a Lipschitz constant $S\ge 0$. We also let $R = \sup_{g\in\widetilde \G} \|g\|$.
\end{assume}

More specific design of  the set $\widetilde\G$ can be included, based on the application at hand. Standard methods of frame theory and compressed sensing can be used for the applications we have in mind.  

The next assumption is used to separate the burst-like term from the background source.

\begin{assume}\label{as3}
The background source $\eta$ is Lipschitz with a Lipschitz constant $L\ge 0$.
\end{assume}

Our next  assumption deals with the error in acquisition of the measurements $\la u(t), g\ra$, $t > 0$, $g\in\G$.

\begin{assume}\label{as5}
The additive noise  $\nu$ in the measurements 
\eqref {measurements}  satisfies $$\sup\limits_{t > 0,\ g\in\G}|\nu(t,g)|\le \sigma.$$
\end{assume}

In view of Assumptions \ref{as4} -- \ref{as5}, we say that an algorithm can stably and accurately approximate any $f$ of the form \eqref {STFC} if the 
recovery error in the time and shape of a single burst produced by the algorithm can be bounded above in terms of the constants $\beta$, $\gamma$, $L$, $R$, $S$, and $\sigma$.  

For Algorithm \ref{alg2}, we will use another assumption on the step size $\beta$.
 
\begin{assume}\label{as6}
 The time step $\beta$ is assumed to satisfy $L\beta < 1$ for Algorithm \ref{alg2}.
\end{assume}

\subsection{Main results}

The main contribution of this paper is the idea of the structured sampler design expressed in Assumption \ref{as1}. By choosing the structure of the set $\mathcal G$ as described,  we acquire the measurements in pairs at each sampling time 
$n\beta$, $n=0, 1, \cdots$. The second measurement in a pair predicts a value of the first measurement in the corresponding pair at the next sampling time 
{{$(n+1)\beta$}}
if no burst happens  between 
{{$n\beta$}} and {{$(n+1)\beta$}}.
This allows us to determine if a burst had occurred in the  interval
{$[n\beta, (n+1)\beta)$}.
Based on this idea and depending on the nature of available measurements, we designed two predictive algorithms. Algorithm \ref{alg:direct} utilizes discrete samples of the measurement function \eqref{measurements}, whereas Algorithm \ref{alg2} uses a small number of its Fourier coefficients over successive time intervals. Both algorithms stably and accurately 
approximate any $f$ of the form \eqref {STFC} in the presence of non-trivial Lipschitz background source and measurement acquisition error. The performance analysis of the two algorithms is presented, respectively, in Theorems \ref{pred1} and \ref{Thm: perturbation1}, and their corollaries. In particular, for each of the two algorithms, we establish guaranteed upper bounds on the recovery errors of the burst times $t_j$ and shapes $f_j$ in terms of the constants $\beta$, $\gamma$, $L$, $R$, $S$, and $\sigma$.  We emphasize that the second algorithm can detect the  bursts exactly if there is no noise and the background source is constant. These results are described in the main part of the paper, Section \ref{predsec}.

In Section \ref{secsim}, we illustrate the performance of the algorithms from Section \ref{predsec} on simple synthetic examples. We see that Algorithm \ref{alg2} typically performs much better than guaranteed by Theorem \ref{Thm: perturbation1}. The guarantees of Theorem \ref{pred1} for Algorithm \ref{alg:direct}, however, are usually \lx{sharper}. 

\subsection{IVP toolkit}\label{tool}
 
We conclude the introduction section with a reminder of basic facts from the theory of one-parameter operator semigroups and their application to solving IVPs of the form \eqref{DFM}. We refer to \cite{EN00} for more information.

A strongly continuous operator semigroup  is a map $T:\mathbb{R}_+\rightarrow B(\HH)$ (where $B(\HH)$ is the space of all bounded linear operators on $\HH$), which satisfies 
\begin {enumerate}[ (i)]
\item $T(0)=I$, 
\item $T(t+s)=T(t)T(s)$ for all $t,s\geq 0$, and
\item $\|T(t)h-h\|\rightarrow 0$ as $t\rightarrow 0$ for all $h\in \HH$.
\end {enumerate}

The operator $A$ is said to be a generator of the semigroup  $T$ if, given
\[D(A) = \left\{h\in \HH: \ \lim_{t\to 0^+} \frac1t(T(t)h-h) \mbox{ exists}\right\}, \]   $A$ satisfies 
 	\[Ah = \lim_{t\to 0^+} \frac1t(T(t)h-h),\ h\in D(A).\]
The semigroup $T$  is said to be uniformly continuous if $\|T(t)-I\|\to 0$ as $t\to 0$. In this case, $D(A) = \HH$.

 Recall \cite[p.~436]{EN00} that the (mild) solution of \eqref{DFM} can be represented  as
\begin{equation}\label{SfDFM}
	u(t)=T(t)u_0+\int_{0}^{t}T(t-\tau)F(\tau)d\tau.
\end{equation}
Substituting $F = f+\eta$ with $f$ of the form \eqref{STFC}, yields
\begin{equation}\label{solburst}
	u(t)=T(t)u_0+\sum_{t_j \le  t} T(t-t_j)f_j+\int_{0}^{t}T(t-\tau)\eta(\tau)d\tau, \quad
	t\ge 0.
\end{equation}

 \section{Predictive algorithms}\label{predsec}
 
 In this  section of the paper, we derive two algorithms for recovering the burst-like forcing term and then we perform error analyses for each. 

\subsection{Discrete sampling algorithm}\label{secalg1}

In the first  method, we acquire the following set of measurements:
\begin{equation}\label{eqn:meas_dic}
    \begin{aligned}
    \mathfrak m_n(g)=&\left\langle u(n\beta),g\right\rangle +\nu(n\beta,g),\\
    \mathfrak m_n( T^*(\beta) g)=&\left\langle u(n\beta),T^*(\beta) g\right\rangle + \nu (n\beta,T^*(\beta) g),\quad g\in\widetilde{\mathcal G}, n\in \N,
    \end{aligned}
\end{equation}
where $\beta$ is a time sampling step, and $\nu(n\beta,g),\ \nu (n\beta,T^*(\beta) g)$ represent additive noise that is assumed to be bounded according to Assumption \ref{as5}:

\[\sup\limits_{n,g}| \nu(n\beta,g)|\le \sigma, \quad \sup\limits_{n,g}| \nu (n\beta,T^*(\beta) g)|\le \sigma,
\]
for some $\sigma\ge0$.

 Thus, the totality of  
 samplers consists of the set $\mathcal G=\widetilde{\mathcal G}\cup  T^*(\beta) \widetilde{\mathcal G}$. The measurement $\mathfrak m_n( T^*(\beta) g)$ can be thought of as a predictor of the value $\left\langle u((n+1)\beta),g\right\rangle$ if no burst \lx{occurred} in $[n\beta, (n+1)\beta)$,  i.e., up to noise in the measurements and the influence of the ``slowly varying" background source, $\mathfrak m_{n+1}(g)$ should approximately   equal to $\mathfrak m_n( T^*(\beta) g)$ if no burst happened in the time interval $[n\beta, (n+1)\beta)$.

 	\begin{algorithm}[h]
\caption{The pseudo-code for approximating the time $t_*$ and the shape $f_*$,  of a burst in the time interval   $[n\beta, (n+1)\beta)$, $n\geq 1$.}
	\label{alg:direct}
	\begin{algorithmic}[1]
		\State \textbf{Goal: }Find a possible burst for  the given measurements.
		\State \textbf{Input: } The measurements: $\mathfrak m_{\ell}(g)$ and
    $\mathfrak m_{\ell}( T^*(\beta) g)$,  for $g\in\widetilde \G$, $\ell\in\{n,n+1,n+2,n+3\}$; a parameter $K >1$.
    \State \lx{Compute $\Gamma_{\ell}(g)=\mathfrak m_{\ell+1}(g)-\mathfrak{m}_{\ell}(T^*(\beta)g)$ for $\ell=n,n+1,n+2$. }
    \State \lx{Compute $\Gamma_{\ell}^{-}(g)=\Gamma_{\ell+1}(g)-\Gamma_{\ell}(g)$ for $\ell=n,n+1$. }
 
	\While{$g\in\widetilde\G$}
	\State $t_* := n\beta+\beta/2$\;
	\If{$|\Gamma_{n}^{-}(g)|\geq  K C L \beta^2\|g\|$ and $|\Gamma_{n+1}^{-}(g)|< C L \beta^2\|g\|$}
	\State ${\mathfrak f}(g) :=  -\Gamma_{n}^{-}(g)$ 
	\Else{\State ${\mathfrak f}(g) :=  0$.}
	\EndIf
 \EndWhile
 
	\State	\hspace{0.5cm}\textbf{   Output: } $t_*$ and ${\mathfrak f}(g)$ for all $g\in\widetilde\G$.   
	\end{algorithmic}
\end{algorithm}

 \begin{theorem} \label {pred1}
 Let Assumptions \ref{as4}, \ref{as1}, \ref{as3}, and \ref{as5} hold.
 Then for every $g\in \widetilde\G$,
 the term $\sum_{j=1}^{N}\langle f_j,g\rangle\delta(t-t_j) $  is well approximated by $
 \sum_{j=1}^{N}\mathfrak f_j(g)\delta(t-\tilde t_j)$ that is obtained via {\ik{successive applications of}} Algorithm \ref{alg:direct}.
 In particular, for the shape of a burst, 
we have  
\begin{equation}\label{fest1}
{|\mathfrak f_j(g)-\langle f_j,g\rangle|} \le  (K+1)C L \beta^2\|g\|+h(f_j,g,\beta)+ 4(K+1)\sigma,\ g\in\widetilde\G,
\end{equation}
where $K$ is some constant  bigger than 1 serving as the parameter in Algorithm \ref{alg:direct}, $C=\sup\limits_{t
 \in [0,\beta]}\|T(t)\|$, $\sigma=\sup\limits_{t,g} \nu(t,g)$, and  $h(f_j,g,\beta)\to 0$ as $\beta\to 0$; and, for the time of the burst, we have $|t_j-\tilde t_j| \le\beta/2$  as long as 
 $\mathfrak f_j(g)\neq 0$ for some $g\in\widetilde\G$.
\end{theorem}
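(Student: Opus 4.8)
The plan is to exploit the predictive structure built into the sampler set $\G = \widetilde\G \cup T^*(\beta)\widetilde\G$ (Assumption \ref{as1}) to reduce the detection of a burst to monitoring a single scalar quantity per sampler. First I would rewrite the building block $\Gamma_\ell(g) = \mathfrak m_{\ell+1}(g) - \mathfrak m_\ell(T^*(\beta)g)$ in closed form. Since $\la u(\ell\beta), T^*(\beta)g\ra = \la T(\beta)u(\ell\beta), g\ra$, substituting the mild solution \eqref{solburst} and using the semigroup law $T(\beta)T(s) = T(\beta+s)$ shows that the contributions of $u_0$ and of all bursts occurring before $\ell\beta$ cancel exactly, leaving
\begin{equation*}
\Gamma_\ell(g) = \la B_\ell + E_\ell, g\ra + \nu((\ell+1)\beta,g) - \nu(\ell\beta, T^*(\beta)g),
\end{equation*}
where $B_\ell = \sum_{\ell\beta < t_j \le (\ell+1)\beta} T((\ell+1)\beta - t_j)f_j$ collects the bursts falling in $I_\ell := [\ell\beta,(\ell+1)\beta)$ and $E_\ell = \int_0^\beta T(\beta-s)\eta(\ell\beta+s)\,ds$ is the background contribution. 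This cancellation is the crux: it is exactly what the structured choice of $\G$ buys us, and it makes the detection independent of the initial state and the history of past bursts.

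Next I would isolate the burst. By Assumption \ref{as4} with $\beta < \gamma/3$, consecutive bursts are more than $3\beta$ apart, so a burst at $t_* \in I_m$ is the only one in any three consecutive subintervals around $I_m$; hence $B_\ell = 0$ for every $\ell$ in the relevant range except $\ell = m$, where $B_m = T((m+1)\beta - t_*)f_*$. The background is controlled by a second-difference estimate: writing $E_{\ell+1} - E_\ell = \int_0^\beta T(\beta-s)[\eta((\ell+1)\beta+s) - \eta(\ell\beta+s)]\,ds$ and using the Lipschitz bound (Assumption \ref{as3}) $\|\eta((\ell+1)\beta+s)-\eta(\ell\beta+s)\| \le L\beta$ together with $C = \sup_{t\in[0,\beta]}\|T(t)\|$ gives $|\la E_{\ell+1}-E_\ell, g\ra| \le CL\beta^2\|g\|$. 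This is precisely why the algorithm uses the second differences $\Gamma_\ell^- = \Gamma_{\ell+1} - \Gamma_\ell$: a single difference would leave an $O(\beta)$ background, whereas the second difference suppresses it to $O(\beta^2)$, the scale of the detection thresholds, while the burst stays at full strength $\la f_*,g\ra = O(1)$. Meanwhile each $\Gamma_\ell^-$ accumulates at most four distinct noise samples, so its noise is bounded by $4\sigma$.

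With these estimates in hand I would analyze the two branches of Algorithm \ref{alg:direct}. Writing $\Gamma_m^-(g) = -\la B_m,g\ra + \la E_{m+1}-E_m,g\ra + (\text{noise})$ and $\Gamma_{m+1}^-(g) = \la E_{m+2}-E_{m+1},g\ra + (\text{noise})$ (no burst in $I_{m+1}, I_{m+2}$), the window $n=m$ is the unique one in which the burst sits in the first subinterval, so $t_* = m\beta + \beta/2$ and $|t_j - \tilde t_j| \le \beta/2$ because $t_j \in I_m$. For the shape, in the branch where the burst is declared one has $\mathfrak f(g) = -\Gamma_m^-(g)$, whence $|\mathfrak f(g) - \la B_m,g\ra| \le CL\beta^2\|g\| + 4\sigma$; in the branch where $\mathfrak f(g) = 0$ (the burst is too weak to cross the threshold for this particular $g$) the failed test gives $|\la B_m,g\ra| \le |\Gamma_m^-(g)| + CL\beta^2\|g\| + 4\sigma < (K+1)CL\beta^2\|g\| + 4\sigma$. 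Finally I would pass from $B_m$ to $f_j$ via $\la B_m, g\ra - \la f_j,g\ra = \la (T((m+1)\beta - t_j)-I)f_j, g\ra =: h(f_j,g,\beta)$, which tends to $0$ as $\beta\to 0$ by strong continuity of $T$ since $0 < (m+1)\beta - t_j \le \beta$. Taking the worse of the two branches yields \eqref{fest1}.

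The main obstacle will be the interaction between the additive noise and the hard thresholds $KCL\beta^2\|g\|$ and $CL\beta^2\|g\|$, which do not themselves contain $\sigma$: I must verify that the noise (bounded by $4\sigma$ per second difference, Assumption \ref{as5}) cannot spuriously trip the ``no burst in the next interval'' test nor erase a genuine detection, and that in every case the resulting error stays within the claimed $(K+1)CL\beta^2\|g\| + h(f_j,g,\beta) + 4(K+1)\sigma$. The remaining work — the cancellations in the first step and the Lipschitz and strong-continuity estimates — is routine once the decomposition $\Gamma_\ell(g) = \la B_\ell + E_\ell, g\ra + \text{noise}$ is established.
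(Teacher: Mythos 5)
Your proposal follows essentially the same route as the paper's proof: the same cancellation of the history and initial state in $\Gamma_\ell$, the same second-difference bound $CL\beta^2\|g\|+4\sigma$ on background plus noise, the same two-branch case analysis (detected vs.\ missed burst), and the same strong-continuity term $h(f_j,g,\beta)=|\langle (T((m+1)\beta-t_j)-I)f_j,g\rangle|$. The one obstacle you flag at the end is resolved in the paper by folding the noise into the detection threshold, $Q(g)=K\left(CL\beta^2\|g\|+4\sigma\right)$, so that whenever a relevant second difference falls below threshold the burst contribution itself is bounded by $(K+1)(CL\beta^2\|g\|+4\sigma)$ and declaring $\mathfrak f(g)=0$ stays within \eqref{fest1}; this is in fact the only point where the paper's proof departs from the literal pseudo-code of Algorithm \ref{alg:direct} (whose thresholds omit $\sigma$ and which uses a ``large-then-small'' rather than the proof's ``large-and-large'' detection rule), so your concern identifies a genuine mismatch in the paper rather than a flaw in your own plan.
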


Using  Assumption \ref {as2}, and an extra condition on the semigroup $T$, we get the following corollary.

\begin{corollary}\label{cor:stab_rec_dirct}
Let Assumptions \ref{as4} -- \ref{as5} hold
and assume also that   \begin{equation}
    \label{pwcond}
    \sup\limits_{0\le\alpha\le\beta}\|(T^*(\alpha)-I)g\|\le D(\beta)\|g\|,  \quad \mbox{for all}\quad g \in \widetilde \G,
\end{equation} and some $D:\mathbb R_+\to \mathbb R_+$ such that $D(\beta)\to 0$ as $\beta\to 0$. Then for any sufficiently small $\beta > 0$, the burst term $f$ of the form \eqref{STFC} is well approximated by $\tilde f (t)= \sum_{j=1}^{N}\tilde f_j\delta(t-\tilde t_j)$ that is obtained via Algorithm  \ref {alg:direct}
from the samples \eqref {measurements} ({{with $\tilde f_j = S_{\G}\mathfrak{ f}_j$}}).
 In particular, 
we have $|t_j-\tilde t_j| \le \frac \beta 2$ as long as $\tilde f_j \neq 0$ and 
\begin{equation}
    \label{Herr}
{\|\tilde f_j - f_j\|} \le S \big((K+1)C L R\beta^2+  4(K+1)\sigma+D(\beta)R\|f_j\|\big),
\end{equation}
where $R=\sup\limits_{g \in \widetilde \G} \|g\|.$
\end{corollary}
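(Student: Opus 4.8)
The plan is to lift the scalar, per-sampler shape estimate \eqref{fest1} of Theorem \ref{pred1} to a norm estimate in $\HH$ by invoking the Lipschitz left inverse $S_\G$ furnished by Assumption \ref{as2}, after first replacing the abstract quantity $h(f_j,g,\beta)$ by the concrete bound supplied by the smoothing hypothesis \eqref{pwcond}. Note that Theorem \ref{pred1} itself does not use Assumption \ref{as2}; the corollary adds it, together with \eqref{pwcond}, precisely to pass from the family of scalar errors $|\mathfrak f_j(g)-\langle f_j,g\rangle|$ to the single Hilbert-space error $\|\tilde f_j - f_j\|$ with an explicit, $g$-free right-hand side.

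The first real step is to pin down $h(f_j,g,\beta)$. Tracing the reconstruction $\mathfrak f_j(g)=-\Gamma_n^-(g)$ through \eqref{solburst} (the semigroup property cancels the $u_0$-term and all bursts and background outside the declared interval), the recovered quantity is $\langle T(\alpha)f_j,g\rangle$ plus a background-difference term of size $O(L\beta^2\|g\|)$ and at most four noise contributions, where $\alpha=(n+1)\beta-t_j\in[0,\beta)$ is the lag to the right endpoint of the detection interval. Hence the nontrivial residual in \eqref{fest1} is exactly the evolution-error pairing $h(f_j,g,\beta)=|\langle (T(\alpha)-I)f_j,g\rangle|$. Moving the adjoint onto $g$ and applying Cauchy--Schwarz together with \eqref{pwcond}, $h(f_j,g,\beta)=|\langle f_j,(T^*(\alpha)-I)g\rangle|\le \|f_j\|\,\|(T^*(\alpha)-I)g\|\le D(\beta)\,\|g\|\,\|f_j\|$, uniformly over $g\in\widetilde\G$ and over $\alpha\in[0,\beta]$.

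Next I would take the supremum over $g\in\widetilde\G$ in \eqref{fest1}, using $\|g\|\le R$ and the bound just obtained, to get $\sup_{g\in\widetilde\G}|\mathfrak f_j(g)-\langle f_j,g\rangle|\le (K+1)CLR\beta^2+4(K+1)\sigma+D(\beta)R\|f_j\|$. Since $f_j\in V$ and $S_\G$ is a left inverse of the analysis map $R_\G$, one has $f_j=S_\G R_\G f_j$; writing $\tilde f_j=S_\G\mathfrak f_j$ (which presumes, as implicit in the statement, that the recovered sample vector $\mathfrak f_j$ lies in the domain $R_\G V$ of $S_\G$) and applying the Lipschitz bound with constant $S$ in the $\ell_\infty(\widetilde\G)$ norm gives $\|\tilde f_j-f_j\|=\|S_\G\mathfrak f_j-S_\G R_\G f_j\|\le S\sup_{g\in\widetilde\G}|\mathfrak f_j(g)-\langle f_j,g\rangle|$, which is exactly \eqref{Herr}. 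The time estimate $|t_j-\tilde t_j|\le\beta/2$ is inherited verbatim from Theorem \ref{pred1} on the detection event $\tilde f_j\neq0$; the role of ``sufficiently small $\beta$'' is to guarantee this event, since as $\beta\to0$ the detection threshold $KCL\beta^2\|g\|\to0$ while $|\Gamma_n^-(g)|\to|\langle f_j,g\rangle|$, so a burst with nonzero shape eventually clears the threshold.

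The only genuinely nonroutine point is the identification and bounding of $h(f_j,g,\beta)$: one must confirm, from the internals of the proof of Theorem \ref{pred1}, that $h$ is precisely the evolution-error pairing over a lag in $[0,\beta)$, and then recognize that \eqref{pwcond} is exactly the hypothesis that converts this pairing into the uniform bound $D(\beta)R\|f_j\|$ after transferring the operator to the sampler via the adjoint. Everything downstream — taking suprema, using $\|g\|\le R$, and applying the Lipschitz left inverse — is formal.
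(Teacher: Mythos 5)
Your proposal is correct and follows essentially the same route as the paper: identify $h(f_j,g,\beta)=|\langle f_j,(T^*(\alpha)-I)g\rangle|$ with lag $\alpha\in[0,\beta)$, bound it by $D(\beta)R\|f_j\|$ via \eqref{pwcond} and Cauchy--Schwarz, then take the supremum over $g\in\widetilde\G$ in \eqref{fest1} and apply the Lipschitz left inverse from Assumption \ref{as2}. Your added remarks on why small $\beta$ guarantees detection and on $\mathfrak f_j$ lying in the domain of $S_\G$ are reasonable glosses on points the paper leaves implicit.
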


{\begin{remark}\label{bern}
If $T$ is uniformly continuous, one may let $D$ in
Corollary \ref{cor:stab_rec_dirct} be defined by $D(\beta)=k\beta$ where $k$ is some (sufficiently large) constant.
The same can be done if the samplers in $\G$ span a subspace of $\HH$ that is invariant for $T$ and such that the restriction of $T$ to this subspace is uniformly continuous. This would happen, for example, if $T$ is the translation semigroup and $\G$ is a subset of a Paley-Wiener space. More information on the Bernstein-type inequality \eqref{pwcond}, such as an explicit estimate of the constant $k$, can be found in \cite[Theorem 3.7]{BK05} and references therein.
\end{remark}
}

 \begin{proof}[Proof of Theorem \ref {pred1}] We will assume that the initial condition $u_0=0$ since we will only consider bursts for times $t>0$. 
  The expressions of  $\mathfrak m_n(g)$ and $\mathfrak m_n(T^*(\beta)g)$ defined in \eqref {eqn:meas_dic} are given by
 
 \begin{equation}\label{MN}
 \begin{aligned}
	\mathfrak m_n(g)=&\sum_{t_j< n\beta}\left\langle T(n\beta-t_j)f_j,g\right\rangle\\
	&+\int_{0}^{n\beta}\left\langle T(n\beta-\tau)\eta(\tau),g\right\rangle d\tau+\nu(n\beta,g).
	\end{aligned}
\end{equation}
and
\begin{equation}\label{MNpred}
\begin{split}
	\mathfrak m_n(T^*(\beta)g)=&\sum_{t_j< n\beta}\left\langle T(n\beta-t_j)f_j,T^*(\beta)g\right\rangle\\
	&+\int_{0}^{n\beta}\left\langle T(n\beta-\tau)\eta(\tau),T^*(\beta)g\right\rangle d\tau+\nu(n\beta,T^*(\beta)g).
	\end{split}
\end{equation}
Let $\Gamma_{n}=\mathfrak{m}_{n+1}(g)-\mathfrak{m}_{n}(T^*(\beta)g)$ be the difference between the  measurement $\mathfrak{m}_{n+1}(g)$ at time $(n+1)\beta$ and its predicted value $\mathfrak{m}_{n}(T^*(\beta)g)$ from the measurement at time $n\beta$ if no burst occurs in the interval $[n\beta,(n+1)\beta)$. Defining the function $v_n$ on $[n\beta,(n+1)\beta)$ for each $n$ as
\begin{equation} \label {VN}
    v_n(t)= \begin{cases}
             f_j\delta(t-t_j), & \text {if } t_j\in [n\beta,(n+1)\beta)\\
               \\
               0, & \text{otherwise,}
          \end{cases}.
\end{equation}
we obtain
 \begin{equation*}
 \begin{aligned}
\Gamma_{n}=& \mathfrak{m}_{n+1}(g)-\mathfrak{m}_{n}(T^*(\beta)g)\\
=&\sum_{t_j< (n+1)\beta}\left\langle T((n+1)\beta-t_j)f_j,g\right\rangle + \int_{0}^{(n+1)\beta}\left\langle T((n+1)\beta-\tau)\eta(\tau),g\right\rangle d\tau+\nu((n+1)\beta,g)\\ 
&-\sum_{t_j< n\beta}\left\langle T(n\beta-t_j)f_j,T^*(\beta)g\right\rangle-\int_{0}^{n\beta}\left\langle T(n\beta-\tau)\eta(\tau),T^*(\beta)g\right\rangle d\tau-\nu(n\beta,T^*(\beta)g) )
\\
= &\left\langle T((n+1)\beta-t_{n})v_{n},g\right\rangle +\int_{n\beta}^{(n+1)\beta}\left\langle T((n+1)\beta-\tau)\eta(\tau),g\right\rangle d\tau\\
& +\nu\left((n+1)\beta,g\right)-\nu\left( n\beta,T^*(\beta)g\right)\\
 =&\left\langle T((n+1)\beta-t_{n})v_{n},g\right\rangle +\int_{0}^{\beta}\left\langle T(\beta-\tau)\eta(n\beta+\tau),g\right\rangle d\tau+\nu\left((n+1)\beta,g\right)-\nu\left( n\beta,T^*(\beta)g\right).
 \end{aligned}
 \end{equation*}
 
In order to minimize the effect of the background source on our prediction, we compare our predictions in two consecutive time  samples  and compute $\Gamma_{n+1}-\Gamma_{n}$. Using the calculation above, we get
\begin{equation}\label{gammaDiff}
     \begin{aligned}
     &\Gamma_{n+1}-\Gamma_{n} \\
     =& \left\langle T((n+2)\beta-t_{n+1})v_{n+1},g\right\rangle+\int_{0}^{\beta}\left\langle T(\beta-\tau)\eta((n+1)\beta+\tau),g\right\rangle d\tau \\
     & -\left(\left\langle T((n+1)\beta-t_n)v_n,g\right\rangle+\int_{0}^{\beta}\left\langle T(\beta-\tau)\eta(n\beta+\tau),g\right\rangle d\tau\right) +\xi_n\\
     =&\left\langle T((n+2)\beta-t_{n+1})v_{n+1},g\right\rangle-\left\langle T((n+1)\beta-t_n)v_n,g\right\rangle \\
     &+\int_{0}^{\beta}\left\langle T(\beta-\tau)(\eta((n+1)\beta+\tau)-\eta(n\beta+\tau)),g\right\rangle d\tau+\xi_n,
     \end{aligned}
 \end{equation}
 where $\xi_n=\nu\left((n+2)\beta,g\right)-\nu\left( (n+1)\beta,T^*(\beta)g\right)-\nu\left((n+1)\beta,g\right)+\nu\left( n\beta,T^*(\beta)g\right)$.
 
 If there is no burst in $[n\beta,(n+2)\beta)$ (i.e., $v_n=v_{n+1}=0$), then the difference $|\Gamma_{n+1}-\Gamma_{n}|$ should be small, and should only depend on the sampling time  $\beta$, the 
 {samplers} $g$, the Lipschitz constant $L$ of the background source, and the noise level $\sigma$. Estimating  $|\Gamma_{n+1}-\Gamma_{n}|$ from above, we get 
  \begin{equation}\label{eqn:no_burst}
      \begin{aligned}
      |\Gamma_{n+1}-\Gamma_{n}|\leq&\left|\int_{0}^{\beta}\left\langle T(\beta-\tau)(\eta((n+1)\beta+\tau)-\eta(n\beta+\tau)),g\right\rangle d\tau\right|+|\xi_n|\\
      \leq& \int_{0}^{\beta}\left| \left\langle T(\beta-\tau)(\eta((n+1)\beta+\tau)-\eta(n\beta+\tau)),g\right\rangle\right|d\tau +4\sigma\\
      =& \int_{0}^{\beta}\left| \left\langle (\eta((n+1)\beta+\tau)-\eta(n\beta+\tau)),T^*(\beta-\tau)g\right\rangle\right|d\tau +4\sigma\\
      \leq & \int_{0}^{\beta}\|\eta((n+1)\beta+\tau)-\eta(n\beta+\tau))\|\|T^*(\beta-\tau)g\|d\tau+4\sigma\\
      \leq& C  L \beta^2\|g\|+4\sigma.
      \end{aligned}
  \end{equation}
   Using the last inequality, we will declare that a burst occurred in $[n\beta,(n+2)\beta)$ if  the value of $|\Gamma_{n+1}-\Gamma_{n}|$ is above the threshold $\Thresh=K \left({C L \beta^2\|g\|}+4\sigma\right) $ where  $K$  is some chosen number larger than 1, i.e.
   $$|\Gamma_{n+1}-\Gamma_{n}|\ge K \left({C L \beta^2\|g\|}+4\sigma\right). $$
   To decide if the burst  occurred in the interval $[n\beta,(n+1)\beta)$ or in $[(n+1)\beta,(n+2)\beta)$, we take into account the estimate $|\Gamma_{n+2}-\Gamma_{n+1}|$ between the times $(n+1)\beta$ and $(n+2)\beta$, and  use the fact that the minimal time between two bursts is $\gamma>3\beta$ (see Assumption \ref{as4}). To do this, we define the burst detector function $\mathfrak f(g)$ on $[(n+1)\beta,(n+2)\beta)$ as follows
  \begin{equation*}
    \mathfrak f(g)= \begin{cases}
             \Gamma_{n+1}-\Gamma_{n}, & \text{if }|\Gamma_{n+1}-\Gamma_{n}|\ge \Thresh ~\mbox{ and } ~|\Gamma_{n+2}-\Gamma_{n+1}|\ge  \Thresh;\\
               \\
               0, & \text{Otherwise.}
          \end{cases}.
\end{equation*}
To see how the function $\mathfrak f(g)$ behaves, 
we compute the difference \[\mathfrak f(g)- \left\langle T((n+2)\beta-t_{n+1})v_{n+1},g\right\rangle,\] where $v_{n+1}\ne 0$ when a burst occurs in the interval $[(n+1)\beta,(n+2)\beta)$ and $v_{n+1}= 0$ otherwise as in \eqref {VN}.  Using \eqref {gammaDiff}, we get
\begin{equation*}
\begin{aligned}
    &\mathfrak f(g)- \left\langle T((n+2)\beta-t_{n+1})v_{n+1},g\right\rangle = \\
     & \begin{cases}-\left\langle T((n+1)\beta-t_n)v_n,g\right\rangle 
     +\int_{0}^{\beta}\left\langle  T(\beta-\tau)(\eta((n+1)\beta+\tau)-\eta(n\beta+\tau)),g\right\rangle d\tau +\xi,& \\
   \hspace{40mm}\text{~~~~if }|\Gamma_{n+1}-\Gamma_{n}|\ge \Thresh ~\mbox{ and } ~|\Gamma_{n+2}-\Gamma_{n+1}|\ge  \Thresh;
               \\
              - \left\langle T((n+2)\beta-t_{n+1})v_{n+1},g\right\rangle, \quad \text{Otherwise}. 
          \end{cases}
\end{aligned}
\end{equation*}
We must consider two cases:  1) when $|\Gamma_{n+1}-\Gamma_{n}|\ge \Thresh$ and $|\Gamma_{n+2}-\Gamma_{n+1}|\ge  \Thresh$; 2) $|\Gamma_{n+1}-\Gamma_{n}|< \Thresh$ or $|\Gamma_{n+2}-\Gamma_{n+1}|<  \Thresh$.

For the case where $|\Gamma_{n+1}-\Gamma_{n}|\ge \Thresh$ and $|\Gamma_{n+2}-\Gamma_{n+1}|\ge  \Thresh$, we detect a burst in $[n\beta,(n+2)\beta)$ and $[(n+1)\beta,(n+3)\beta)$. But since $t_{j+1}-t_{j}>3\beta$ by assumption, there must be a single burst that occurred in  $[(n+1)\beta,(n+2)\beta)$ and there is no burst in $[n\beta,(n+1)\beta)$. Therefore, $\left\langle T((n+1)\beta-t_n)v_n,g\right\rangle=0$ since $v_n=0$. In addition, by \eqref{eqn:no_burst}, when $|\Gamma_{n+1}-\Gamma_{n}|\ge \Thresh$ and $|\Gamma_{n+2}-\Gamma_{n+1}|\ge  \Thresh$, from \eqref  {BurstErrFct}, we get that 
\[\left|\mathfrak f(g)- \left\langle T((n+2)\beta-t_{n+1})v_{n+1},g\right\rangle \right|\leq C L\beta^2\|g\|+4\sigma.
\]

 For the second case when $|\Gamma_{n+1}-\Gamma_{n}|< \Thresh$  or $|\Gamma_{n+2}-\Gamma_{n+1}|<  \Thresh$, we can assume, without loss of generality, that $|\Gamma_{n+2}-\Gamma_{n+1}|< \Thresh$. Computing the error for this case we get
 \lx{
 \begin{equation} \label {BurstErrFct}
     \begin{aligned}
    & |\mathfrak f(g)- \left\langle T((n+2)\beta-t_{n+1})v_{n+1},g\right\rangle|\\
 =&| \left\langle T((n+2)\beta-t_{n+1})v_{n+1},g\right\rangle+(\Gamma_{n+2}-\Gamma_{n+1})-(\Gamma_{n+2}-\Gamma_{n+1})|\\
 \le&| \left\langle T((n+2)\beta-t_{n+1})v_{n+1},g\right\rangle+(\Gamma_{n+2}-\Gamma_{n+1})|+|\Gamma_{n+2}-\Gamma_{n+1}|\\
 =&\left|\int_{0}^{\beta}\left\langle T(\beta-\tau)(\eta((n+2)\beta+\tau)-\eta((n+1)\beta+\tau)),g\right\rangle d\tau {\ik{+ \xi_{n+1}}}\right|+|\Gamma_{n+2}-\Gamma_{n+1}|\\
 \le& C L \beta^2\|g\|+4\sigma+\Thresh=(K+1)\Big(C L \beta^2\|g\|+4\sigma\Big).
     \end{aligned}
 \end{equation}
 }
 Using the last inequality,  we can now estimate $|\mathfrak f(g)-\langle v_{n+1},g\rangle|$ by
 \begin{equation}
     \begin{aligned}
     &|\mathfrak f(g)-\langle v_{n+1},g\rangle|\\
     \le&|\mathfrak f(g)- \left\langle T((n+2)\beta-t_{n+1})v_{n+1},g\right\rangle|\\
     &+| \left\langle T((n+2)\beta-t_{n+1})v_{n+1},g\right\rangle-\langle v_{n+1},g\rangle|\\
     \le& (K+1)\big(C L \beta^2\|g\|+4\sigma\big)+h(v_{n+1},g,\beta),
     \end{aligned}
 \end{equation}
 where $h(v_{n+1},g,\beta)=| \left\langle T((n+2)\beta-t_{n+1})v_{n+1},g\right\rangle-\langle v_{n+1},g\rangle|$. Using Property (3) of Section \ref {tool}, it follows that $h(v_{n+1},g,\beta)\to 0$ as $\beta \to 0$.
\end{proof}

\begin{proof}[Proof of Corollary \ref {cor:stab_rec_dirct}] For $t_j\in [n\beta,(n+1)\beta)$,  using the expression of $h(v_{n},g,\beta)$ derived in the proof of Theorem \ref {pred1}, Definition \eqref {VN} of $v_n$, and the assumptions of the corollary we get
\begin{equation}
    \begin{split}
        h(f_j,g,\beta)=&| \left\langle T((n+1)\beta-t_n)f_j,g\right\rangle-\langle f_j,g\rangle|\\
=&|\left\langle f_j,\big(T^*((n+1)\beta)-t_n)-I\big)g\right\rangle| 
\le D(\beta)\|g\|\|f_j\| \le D(\beta)R\|f_j\|.
    \end{split}
\end{equation}
Thus, using \eqref {fest1} and Assumption \ref{as2} we obtain
\[
{\|\tilde f_j - f_j\|} \le S \big((K+1)C L R\beta^2+   D(\beta)R\|f_j\|\big)
\]
and the result is proved.
\end{proof}

 \subsection{The Prony-Laplace algorithm}\label{secalg2}
 
 In this section, we describe the second predictive algorithm for approximating the burst-like portion $f$ of the forcing term $F$ in the IVP \eqref{DFM}. In contrast with the case of the first algorithm, here we use average samples of the measurement function \eqref{measurements}, which can be thought of as discrete samples of its short-time Fourier transform. 
 The idea of the
 algorithm is based on the Laplace transform \cite{ABHN11} and Prony's methods \cite{peter2013generalized}.

 Similar to the first algorithm, the predictive nature of this one is also manifested in a specific choice of
 the sampling set $\G$. However, this time, we use the generator $A$ rather than the semigroup $T$ (see Assumption \ref{as1}).

Under Assumptions \ref{as4} -- \ref{as6}, 
we will utilize the measurements of the form
 \begin{equation}\label{Samples}
\hatm_{k\ell}(g)=\int_{(\ell-1)\beta}^{(\ell+1)\beta} e^{-\frac{\pi i k}\beta t}\left\langle u(t), \left(\frac{\pi i k}\beta I- A^*\right)g\right\rangle dt + {\nu_{k \ell}(g)}
\end{equation}
for $\ k \in \{0, 1, 2\},\ \ell\in\N,\ g\in\widetilde\G$. 

Our goal is to provide a good approximation of the signal $f$ of the form \eqref{STFC} given the measurements \eqref{Samples}. Clearly, these samples can be easily obtained from the measurements \eqref{measurements} if those are given; 
in this case,
 \begin{equation}
     \label{nukl}
     \nu_{k\ell}(g) =  \int_{(\ell-1)\beta}^{(\ell+1)\beta} e^{-\frac{\pi i k}\beta t} \left[\frac{\pi i k}\beta\nu(t, g) - \nu(t, A^*g)\right]dt, \ k \in \{0, 1, 2\},\ \ell\in\N,\ g\in\widetilde\G.
 \end{equation}
Observe that from \eqref{nukl} and Assumption \ref{as5} we get
\begin{equation}
    \label{nuklest}
    |\nu_{k\ell}(g)| \le 2(\pi k + \beta)\sigma \le 2(2\pi + \beta)\sigma =: \tilde\sigma, \ k \in \{0, 1, 2\},\ \ell\in\N,\ g\in\widetilde\G.  
\end{equation}

Our goal will be achieved once the following theorem is proved.

 	\begin{algorithm}[h]
	\caption{The pseudo-code for approximating the time $t_*$ and the shape $f_*$,  of a burst in the time interval  $[(\ell-\frac23)\beta, (\ell+\frac{17}3)\beta]$. 
	}
	\label{alg2}
	\begin{algorithmic}[1]
		\State \textbf{Goal: }Find a possible burst given two triples of measurements per $g\in\widetilde\G$.
		\State \textbf{Input: }
		$\hatm_{k(\ell+j)}(g)$, $k \in \{0, 1, 2\}$, $j=\lx{ \{\ell\} }+\{0,1,2,3,4,5\}$, $g\in\widetilde\G$.
	\State Let $t_* = \ell -1$.	
	\For{$g\in\widetilde\G$}
	\State Let $t(g) = \ell-1$ and $\mathfrak f(g) =0$.
	\State Let \lx{ $\Gamma_{kj}  =  \hatm_{kj}(g) + (-1)^{j} \hatm_{(k+1)j}(g)$},  $k \in \{0, 1\}$, $j\in \lx{ \{\ell\}+}\{0,1,2,3,4,5\}$. 
	\State Let 
	$\Delta_{kj}=\Gamma_{kj}-\Gamma_{k(j+2)}$,  $k\in\{0,1,2\}$, {$j\in \lx{ \{\ell\}+}\{0,1,2,3\}$}. 
	\For{$j\in \lx{\{\ell\}+}\{0,1,2,3,4,5\}$}
		\State Set $\chi_{j} = 1$.
	\EndFor
	\For{$j\in \lx{ \{\ell\}+}\{0,1,2,3,4,5\}$}
		\If{$\min\{|\Delta_{0j}|, |\Delta_{1j}|, |\Delta_{0j}+\lx{(-1)^{j}}\Delta_{1j}|\}\leq \frac{64}{\pi}\beta\|g\|+16 \max\left\{ \tilde\sigma,\sqrt{\tilde\sigma}   \right\}$ 
		} 
	    \State $\chi_{j} := 0$, $\chi_{j+2}:=0$. 
	    \EndIf
	\EndFor
		\For{$j\in \lx{ \{\ell\}+}\{0,1,2,3\}$}
		\If{$\chi_{j} \chi_{j+2} =1$}
	    \State $\chi_{j} := 0$.
	    \EndIf
	\EndFor
	\For{$j\in \lx{ \{\ell\}+}\{0,1,2,3,4,5\}$}
		\If{$\chi_{j} = 1$}
		\State	$t(g) = \ell\beta-\frac\beta\pi\arg\frac{\Delta_{1j}}{\Delta_{0j}}$,
		\If{$|t(g)-\ell\beta|<\frac23\beta$}
		\State	$\mathfrak f(g) := 
		\frac{\Delta_{0j}^2}{\Delta_{0j}+\lx{(-1)^{j}} \Delta_{1j}}$.
		\EndIf
		\EndIf
	\EndFor	
	\EndFor
	\State Compute $t_*$ by averaging all $t(g)$ that are bigger than $\ell-1$.
	\State Let $f_* = S_{\G}\mathfrak{ f}$.
	\State	\textbf{Output: } $t_*$ and $f_*$.   
	\end{algorithmic}
\end{algorithm}

{\begin{theorem}\label{Thm: perturbation1}
 Suppose Assumptions \ref{as4} -- \ref{as6} hold.  
 Then  
 any burst term $f$ of the form \eqref{STFC} is well approximated by $\tilde f (t)= \sum_{j=1}^{N}\tilde f_j\delta(t-\tilde t_j)$ that is obtained via Algorithm   
 \ref{alg2}
from the samples \eqref{Samples}.
 In particular, we have
\[|t_j-\tilde t_j| \leq\frac{1}{3}\left(L\beta^2+ {\beta}\cdot\min\left\{1,\sqrt{\tilde\sigma}\right\}\right)\quad \mbox{as long as}\quad \tilde f_j \neq 0,
\]
and 
\begin{equation}
    \label{fest}
{\|f_j - \tilde f_j\|} \le SR\cdot \max\left\{\frac{5}{3}\|f_j\| (L\beta+\min\{1,\sqrt{\tilde\sigma}\}),   \frac{48}\pi L\beta^2+\frac{12\tilde\sigma}{R}\right\},
\end{equation}
where $\tilde\sigma$ is defined in \eqref{nuklest}.
\end{theorem}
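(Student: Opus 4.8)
The plan is to first convert the measurements \eqref{Samples} into an explicit Prony-type system via integration by parts. Writing $u(t)$ through the mild-solution formula \eqref{solburst} and using $\langle u(t),A^*g\rangle=\langle Au(t),g\rangle=\langle \dot u(t)-F(t),g\rangle$, one sees that the integrand $e^{-\pi i k t/\beta}\langle u(t),(\tfrac{\pi i k}\beta I-A^*)g\rangle$ is, up to the forcing, an exact derivative, namely $-\tfrac{d}{dt}\big[e^{-\pi i k t/\beta}\langle u(t),g\rangle\big]+e^{-\pi i k t/\beta}\langle F(t),g\rangle$. Integrating over $[(\ell-1)\beta,(\ell+1)\beta]$ therefore splits $\hatm_{k\ell}(g)$ into three pieces: a boundary term $(-1)^{k(\ell+1)}\big(\langle u((\ell-1)\beta),g\rangle-\langle u((\ell+1)\beta),g\rangle\big)$; a burst (Prony) sum $\sum_{t_i\in[(\ell-1)\beta,(\ell+1)\beta]} z_i^{\,k}\langle f_i,g\rangle$ with $z_i:=e^{-\pi i t_i/\beta}$; and a background integral $\int e^{-\pi i k t/\beta}\langle\eta(t),g\rangle\,dt$; plus noise $\nu_{k\ell}(g)$ bounded by $\tilde\sigma$ as in \eqref{nuklest}.

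Next I would show how the two algebraic combinations in Algorithm \ref{alg2} successively kill the two unwanted pieces. The combination $\Gamma_{kj}=\hatm_{kj}(g)+(-1)^j\hatm_{(k+1)j}(g)$ annihilates the boundary term exactly, because the boundary coefficients $(-1)^{k(j+1)}$ and $(-1)^j(-1)^{(k+1)(j+1)}$ are negatives of one another and so cancel regardless of the unknown grid values of $u$; this is precisely the predictive role of the structure $\G=\widetilde\G\cup A^*\widetilde\G$ from Assumption \ref{as1}. The second difference $\Delta_{kj}=\Gamma_{kj}-\Gamma_{k(j+2)}$ then removes the background: for $k\neq0$ the integral of $e^{-\pi i k t/\beta}$ over any $2\beta$-window of the grid vanishes, so a constant background contributes nothing to $\Gamma_{kj}$ for $k\neq0$, while for $k=0$ it contributes the same $2\beta\langle\eta_0,g\rangle$ to $\Gamma_{0j}$ and $\Gamma_{0(j+2)}$ and hence cancels in $\Delta_{0j}$. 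For a genuinely Lipschitz background the same computation leaves a remainder of size $L\beta^2\|g\|$, which is the origin of every $L\beta^2$ term in the statement (and where Assumption \ref{as6} keeps these remainders in the admissible regime).

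With boundary and background removed, a single burst $t_*$ of shape $f_*$ that lies in window $j$ but not window $j+2$ — a configuration forced by the separation $\gamma>6\beta$ of Assumption \ref{as4} — gives, up to errors of size $L\beta^2\|g\|+\tilde\sigma$, the rank-one data $\Delta_{0j}\approx(1+(-1)^j z_*)\langle f_*,g\rangle$ and $\Delta_{1j}\approx z_*(1+(-1)^j z_*)\langle f_*,g\rangle$. The ratio $\Delta_{1j}/\Delta_{0j}$ recovers $z_*=e^{-\pi i t_*/\beta}$, whose argument yields $t_*$ (the test $|t(g)-\ell\beta|<\tfrac23\beta$ fixes the correct branch and selects the window in which the burst is well-centered, so that $|1+(-1)^j z_*|$ is bounded below), and the quotient $\Delta_{0j}^2/(\Delta_{0j}+(-1)^j\Delta_{1j})$ recovers $\langle f_*,g\rangle$ exactly, since $\Delta_{0j}+(-1)^j\Delta_{1j}=(1+(-1)^j z_*)^2\langle f_*,g\rangle$. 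The $\chi_j$ thresholding at level $\tfrac{64}\pi\beta\|g\|+16\max\{\tilde\sigma,\sqrt{\tilde\sigma}\}$ both rejects windows carrying only background/noise and certifies, for an accepted window, lower bounds on $|\Delta_{0j}|$, $|\Delta_{1j}|$, and $|\Delta_{0j}+(-1)^j\Delta_{1j}|$; the subsequent $\chi_j\chi_{j+2}$ bookkeeping prevents one burst from being charged to two overlapping windows.

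The technical heart, and the step I expect to be hardest, is the perturbation analysis converting these exact identities into the quantitative bounds. I would write $\Delta_{kj}=\Delta_{kj}^{\mathrm{ideal}}+e_{kj}$ with $|e_{kj}|\lesssim L\beta^2\|g\|+\tilde\sigma$ and then bound the error of $\arg(\Delta_{1j}/\Delta_{0j})$ and of the shape quotient. Both are divisions, so the threshold-certified lower bounds are essential: the noise part of the time error is of order $\beta\,\tilde\sigma/|\langle f_*,g\rangle|$, and since acceptance forces $|\langle f_*,g\rangle|\gtrsim\max\{\tilde\sigma,\sqrt{\tilde\sigma}\}$ this becomes $\beta\min\{1,\sqrt{\tilde\sigma}\}$ — exactly the $\sqrt{\tilde\sigma}$ in the time bound. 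Propagating the two regimes (a multiplicative error proportional to $\|f_j\|$ from the residual factor $|1+(-1)^j z_*|$ and imperfect centering, and an additive error of size $L\beta^2+\tilde\sigma/R$) through $|\mathfrak f(g)-\langle f_j,g\rangle|$ and then through the Lipschitz left inverse $S_\G$ of Assumption \ref{as2}, using $\|g\|\le R$ and $\|\tilde f_j-f_j\|\le S\sup_g|\mathfrak f(g)-\langle f_j,g\rangle|$, yields \eqref{fest}. Keeping track of the explicit constants and of the index bookkeeping over the six windows $j\in\{\ell,\dots,\ell+5\}$ is where most of the labor lies.
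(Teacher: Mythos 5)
Your proposal follows essentially the same route as the paper's proof: integration by parts to reduce \eqref{Samples} to a Prony-type system \eqref{ga1} with a $k$-independent boundary term, the combination $\Gamma_{k\ell}=\hatm_{k\ell}+(-1)^{\ell}\hatm_{(k+1)\ell}$ to annihilate that boundary term, the second difference $\Delta_{k\ell}=\Gamma_{k\ell}-\Gamma_{k(\ell+2)}$ to reduce the Lipschitz background to an $O(L\beta^2\|g\|)$ remainder, and a threshold-certified perturbation analysis of the ratio $\Delta_{1\ell}/\Delta_{0\ell}$ and the quotient $\Delta_{0\ell}^2/(\Delta_{0\ell}+(-1)^{\ell}\Delta_{1\ell})$. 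The outline is correct and identifies the same sources for each term in the error bounds (the $\sqrt{\tilde\sigma}$ from the choice of $Q(g)$, the $L\beta^2$ from the background cancellation), so only the explicit constant tracking remains.
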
}

From \eqref{fest}, the following result on exact reconstruction is immediate.

  \begin{corollary}
    If $L=\sigma=0$, then we have  
    $\tilde{f}=f$.
  \end{corollary}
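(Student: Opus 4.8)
The plan is to obtain this as a direct specialization of Theorem \ref{Thm: perturbation1}, reading off what the two error bounds become when $L=\sigma=0$. The first thing I would check is that the vanishing of the measurement noise level $\sigma$ propagates to the averaged noise level $\tilde\sigma$: from the estimate \eqref{nuklest}, $\tilde\sigma = 2(2\pi+\beta)\sigma = 0$, and consequently $\min\{1,\sqrt{\tilde\sigma}\} = \min\{1,0\} = 0$. This is the only preliminary observation needed, and it removes every $\tilde\sigma$- and $\min$-dependent term from both conclusions of the theorem.

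Next I would treat the shape estimate \eqref{fest} before the time estimate, since the time estimate is conditional. With $L=0$ and $\tilde\sigma=0$, each of the two arguments of the maximum in \eqref{fest} is a sum of terms carrying a factor of either $L$ or $\tilde\sigma$, so both arguments are zero; hence $\|f_j-\tilde f_j\| \le SR\cdot 0 = 0$, and therefore $\tilde f_j = f_j$ for every $j$. In particular, for a genuine burst ($f_j\neq 0$) the reconstructed shape satisfies $\tilde f_j = f_j \neq 0$.

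Having certified that $\tilde f_j\neq 0$ whenever $f_j\neq 0$, I can now invoke the time estimate of Theorem \ref{Thm: perturbation1}, whose hypothesis ``$\tilde f_j\neq 0$'' is met. Its right-hand side $\tfrac13\big(L\beta^2 + \beta\cdot\min\{1,\sqrt{\tilde\sigma}\}\big)$ equals $0$ under $L=\tilde\sigma=0$, so $|t_j-\tilde t_j|\le 0$, i.e. $\tilde t_j = t_j$. Combining the exact recovery of every shape and every time gives $\tilde f(t)=\sum_{j=1}^N \tilde f_j\,\delta(t-\tilde t_j)=\sum_{j=1}^N f_j\,\delta(t-t_j)=f(t)$, as claimed.

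There is no computational obstacle here; the only points requiring care are logical rather than analytic. I must apply the (unconditional) shape bound first, so as to license the use of the time bound, which is asserted only ``as long as $\tilde f_j\neq 0$.'' I would also note that the conclusion $\tilde f=f$ as a full sum of the form \eqref{STFC} relies on the indexing and pairing of true and reconstructed bursts already supplied by Theorem \ref{Thm: perturbation1}: once that theorem guarantees the correct count of bursts with matched indices, the term-by-term equalities $\tilde f_j=f_j$ and $\tilde t_j=t_j$ upgrade to the distributional identity $\tilde f=f$.
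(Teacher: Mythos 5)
Your proposal is correct and follows essentially the same route as the paper, which simply observes that the corollary ``is immediate'' from the bound \eqref{fest} once $L=\sigma=0$ forces $\tilde\sigma=0$. If anything, your write-up is more careful than the paper's one-line justification: you note that exactness of the times $\tilde t_j$ also requires invoking the (conditional) time estimate, and you order the argument so that the unconditional shape bound certifies $\tilde f_j = f_j \neq 0$ first, thereby licensing that step.
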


From the proofs below it will be clear that a simpler algorithm can be used in the case when $L = \sigma =0$ (see Algorithm \ref{alg1}). For numerical purposes and to avoid missing a burst at times equal to even integer multiples of $\beta$, this algorithm should be applied with a time step $\beta$. 

	\begin{algorithm}[h]
	\caption{The pseudo-code for finding the time $t_*$ and the shape $f_*$ of a burst in the time interval  $(0, 2\beta)$, when $L =\sigma=0$.}
	\label{alg1}
	\begin{algorithmic}[1]
		\State \textbf{Goal: }Find a possible burst given a triple of measurements for each $g\in\widetilde\G$.
		\State \textbf{Input: }
		$\hatm_{k1}(g)$, $k \in \{0, 1, 2\}$, $g\in\widetilde\G$.
	\For{$g\in\widetilde\G$}
		\State Let 
	$\lx{\Gamma_{01}}=\hatm_{01}(g)-\hatm_{11}(g)$ and $ \lx{\Gamma_{11}}=\hatm_{11}(g)-\hatm_{21}(g)$.
		\If{$\min\{\lx{|\Gamma_{01}-\Gamma_{11}|}, \lx{|\Gamma_{11}|}\}= 0$}
	\State  ${\mathfrak f}(g) :=0$.
			\Else 
		\State	$t_* := \frac\beta{\pi}  \lx{\arg \frac{\Gamma_{01}}{\Gamma_{11}}}$,
		\State	${\mathfrak f}(g) :=  \lx{\frac{\Gamma_{01}^2}{\Gamma_{01}-\Gamma_{11}}}$.
		\EndIf
	\EndFor
	\State Set $f_* = S_{\G} {\mathfrak f}$.
	\If{$f_*= 0$}
	\State  $t_* :=0$.
	\EndIf
	\State	\textbf{Output: } $t_*$ and $f_*$.   
	\end{algorithmic}
\end{algorithm}

The remainder of this section constitutes the proof of Theorem \ref{Thm: perturbation1}. 
  
 \subsubsection{Reduction to a Prony-type problem}
 Let us compute the coefficients of the (distributional) Fourier transforms $\mathcal F_\ell$ of the (generalized) functions in the equation of the IVP \eqref{DFM} on the intervals $\mathcal J_\ell = [(\ell-1)\beta, (\ell+1)\beta]$, $\ell\in\N$. Integrating by parts in the left-hand-side of the equation in \eqref{DFM}, we have (for $k\in\Z$ and $\ell\in\N$)
 \[
 \begin{split}
 (\mathcal F_\ell\{\left\langle \dot u(\cdot), g\right\rangle\})(k)& =  
e^{-{\pi i k}(\ell-1)}\left\langle  
h_\ell, g\right\rangle 
+  
\left(\mathcal F_\ell\left\{\left\langle u(\cdot), \frac{\pi i k}\beta g\right\rangle \right\}\right)(k), \quad g\in\HH,  
\end{split}
 \]
 where  $h_\ell \in \HH$ 
 depends on $u$,  $\ell$,  and $\beta$, but not on $k\in\mathbb Z$. More precisely, if no burst happens at the end points of the interval $\mathcal J_\ell$, we have \[h_\ell = u((\ell+1)\beta)-u((\ell-1)\beta)-\sum_{j:\, t_j\in \mathcal J_\ell} [u(t_j) - \lim_{t\to t_j^-} u(t)];\]
 a similar formula holds when a burst does happen at an end point of $\mathcal J_\ell$.

 The right-hand-side of the equation in \eqref{DFM} yields
 \[
 \begin{split}
(\mathcal{F}_\ell\{\langle Au(\cdot) &+f(\cdot)+\eta(\cdot), g\rangle \} ) (k)   =
\left(\mathcal F_\ell\left\{\left\langle u(\cdot), A^* g\right\rangle \right\} \right)(k) 
\\ & +  \sum_{j:\, t_j\in \mathcal J_\ell} \la f_j, g \ra e^{-\frac{\pi i k}\beta t_j} +  
\int_{(\ell-1)\beta}^{(\ell+1)\beta} e^{-\frac{\pi i k}\beta t}\left\langle \eta(t), g\right\rangle dt, 
\quad g\in\HH.  
\end{split}
 \]
 Combining these two equations with \eqref{Samples}, we get
 \begin{equation}\label{ga1}
 \begin{split}
     \hatm_{k\ell}(g) = \nu_{k\ell}(g) & - 
     (-1)^{k(\ell-1)}\left\langle 
     h_\ell, g\right\rangle+ \sum_{j:\, t_j\in \mathcal J_\ell} \la f_j, g\ra e^{-\frac{\pi i k}\beta t_j} \\ & + \int_{(\ell-1)\beta}^{(\ell+1)\beta} e^{-\frac{\pi i k}\beta t}\left\langle \eta(t), g\right\rangle dt,  \quad g\in\widetilde\G,\  \ell\in\N, \ k \in \Z.  
 \end{split}
 \end{equation}

 Summing equalities in \eqref{ga1} over $\ell \in \N$, we would get a noisy version of an irregular Prony (or super-resolution) problem studied, for example, in \cite{BDVMC08, peter2013generalized} or \cite{CF13, CF14}. Our problem is, on one hand, simpler than those in the literature precisely because we do not need to sum over $\ell \in \N$. On the other hand, our problem is harder because we need to take care of the extra unknown terms that come from the background source and the values of the function $u$ that were not measured.
 
 \begin{remark}
 A rigorous proof of 
 formula \eqref{ga1} that does not involve distributions  
 may be obtained via the use of the Laplace (rather than Fourier) transform. We skip the derivation as it distracts from the main purpose of this paper but keep the name of Laplace in the algorithm as a tacit acknowledgment. 
 \end{remark}
 
 \subsubsection{Derivation of Algorithm \ref{alg2}.}
 
 We need to determine the time $t_*$ and the shape $f_*$ of a possible burst in the time interval  $[(\ell-\frac23)\beta, (\ell+\frac{17}3)\beta]$. 
 Due to Assumption \ref{as4} there are at most two bursts in this interval. Our measurements utilize the values of the function $u$ in the interval $[(\ell-1)\beta,(\ell+6)\beta]$ of length $7\beta$.  
 We will treat them as two interlacing tuples of measurements, each of which covers the length of $6\beta$: $\{(\hatm_{k\ell}(g), \hatm_{k(\ell+2)}(g), \hatm_{k(\ell+4)}(g))$: $k\in\{0,1,2\}$, $g\in\widetilde\G\}$ and $\{(\hatm_{k(\ell+1)}(g), \hatm_{k(\ell+3)}(g)$, $\hatm_{k(\ell+5)}(g))$: $k\in\{0,1,2\}$, $g\in\widetilde\G\}$. At least one of these two tuples will detect a burst if it is sufficiently large and every such burst will be detected by our algorithm when it is run consecutively.
 
 We will use \eqref{ga1} as the starting point for the derivation of the algorithm
 rewriting it via a change of variables in the integral as
  \begin{equation*}\label{gkl11}
 \begin{split}
 \hatm_{k\ell}&(g) =\chi_{\ell} e^{-\frac{\pi k i}\beta t_*}\left\langle f_*, g\right\rangle+  \nu_{k\ell}(g)+\\
 & (-1)^{k(\ell-1)} \left(
 \int_0^{2\beta} e^{-\frac{\pi k i}\beta\tau}\left\langle  \eta(\tau+(\ell-1)\beta),g\right\rangle d\tau
 - \left\langle h_\ell, g\right\rangle\right), \ g\in \widetilde{\G}, \  \ell\in\N, \ k \in \Z,  
 \end{split}
  \end{equation*} 
where $\chi_{\ell}$ equals $1$ if $t_*\in [(\ell-1)\beta, (\ell+1)\beta)$ and $0$ otherwise. 
 The observation that  $h_\ell \in \HH$ does not depend on $k\in\mathbb Z$ allows us to get rid of it by introducing
 \begin{equation}\label{gakl}
\begin{split}
\Gamma_{k\ell}(g)  =&~  \hatm_{k\ell}(g) + (-1)^{\ell} \hatm_{(k+1)\ell}(g)  \\
=&~  \chi_{\ell} e^{-\frac{\pi k i}\beta t_*}\left(1+ (-1)^{\ell} e^{-\frac{\pi i}\beta t_*}\right)\left\langle f_*, g\right\rangle+(-1)^{k(\ell-1)}\widetilde\eta_{k\ell}(g)  \\ 
&~-
    \left((-1)^{k(\ell-1)}\left\langle h_\ell, g\right\rangle + (-1)^{(k+1)(\ell-1)}(-1)^{\ell}\left\langle h_\ell, g\right\rangle \right)   \\
    &~+(\nu_{k\ell}(g)-(-1)^{\ell}\nu_{(k+1)\ell}(g))  \\
=&~  \chi_{\ell} e^{-\frac{\pi k i}\beta t_*}\left(1+ (-1)^{\ell} e^{-\frac{\pi i}\beta t_*}\right)\left\langle f_*, g\right\rangle   \\
&~+(-1)^{k(\ell-1)}\widetilde\eta_{k\ell}(g)+(\nu_{k\ell}(g)-(-1)^{\ell}\nu_{(k+1)\ell}(g)),  
\end{split}
\end{equation}
where
\[
\widetilde\eta_{k\ell}(g) = 
\int_0^{2\beta} e^{-\frac{\pi k i}\beta\tau}\left(1+(-1)^\ell e^{-\frac{\pi i}\beta \tau}\right)\left\langle  \eta(\tau+(\ell-1)\beta),g\right\rangle d\tau.
\]

If there were no measurement acquisition errors or background source, we could use \eqref{gakl} to determine $t_*$ and $f_*$ under the condition that $\chi_{\ell} = 1$ and
\begin{equation}\label{thirdcond}
1+(-1)^\ell e^{-\frac{\pi  i}\beta t_*} \neq 0.
\end{equation}
(This last requirement is the reason for having to interlace two tuples of measurements; we will elaborate further below). Indeed, considering \eqref{gakl} for $k=0$ and $k=1$ we get two equations with two unknowns which can be solved explicitly as long as it does not involve division by $0$, i.e. $\Gamma_{1\ell} \neq 0$, $\Gamma_{0\ell} \neq \Gamma_{1\ell}$, and \eqref{thirdcond} holds. The result for $\ell=1$ is summarized as Algorithm \ref{alg1}.

Essentially the same idea works once the background source and the measurement acquisition errors have been accounted for.
To do so, we use the same trick as in the derivation of Algorithm \ref{alg:direct}. In particular, we let
\begin{equation}\label{Delta}
\begin{split}
    \Delta_{k\ell}(g) &= \Gamma_{k\ell}(g)-\Gamma_{k(\ell+2)}(g) \\
    &= (\chi_{\ell} - \chi_{\ell+2}) e^{-\frac{\pi k i}\beta t_*}\left(1+ (-1)^{\ell} e^{-\frac{\pi i}\beta t_*}\right)\left\langle f_*, g\right\rangle
    + \epsilon_{k\ell}(g)+\mu_{k\ell}(g),
\end{split}
\end{equation}
where
\begin{equation}
    \label{mukl}
    \mu_{k\ell}(g)=(\nu_{k\ell}(g)-\nu_{k(\ell+2)}(g))-(-1)^{\ell}(\nu_{(k+1)\ell}(g)-\nu_{(k+1)(\ell+2)}(g)),
\end{equation}
and
\[
\begin{split}
    \epsilon_{k\ell}(g) &= (-1)^{k(\ell-1)}\left(\widetilde\eta_{k\ell}(g) - \widetilde\eta_{k(\ell+2)}(g)\right) \\ &=(-1)^{k(\ell-1)}\int_0^{2\beta} e^{-\frac{\pi k i}\beta\tau}\left(
    1+(-1)^\ell e^{-\frac{\pi  i}\beta\tau}
    \right)\left\langle  \eta(\tau+(\ell-1)\beta)-\eta(\tau+(\ell+1)\beta),g\right\rangle d\tau
\end{split}
\]
 is controlled by the Lipschitz constant $L$ of the background source $\eta$:
\begin{equation}\label{epsest}
|\epsilon_{k\ell} (g)| \le 2L\beta\|g\| \int_0^{2\beta}\left|
    1+(-1)^\ell e^{-\frac{\pi  i}\beta\tau}
    \right|d\tau  = \frac{16}\pi L\beta^2\|g\|.
\end{equation}

Assumption \ref{as5} allows us to control the error due to measurement acquisition,
 since \eqref{nuklest} and \eqref{mukl} imply
\begin{equation}\label{noi_mea_est}
    |\mu_{k\ell}(g)|\leq 4\tilde\sigma.
\end{equation}

The equations \eqref{Delta} (with $k=0$ and $k=1$) can now be used in place of equations \eqref{gakl} to approximate $t_*$ and $f_*$. For numerical purposes, we must not just avoid dividing by $0$ but also avoid dividing by a very small number. To this end, we introduce the following threshold.  We will write that $\Delta_{\cdot\ell}$ detects a burst if there is $g\in\widetilde \G$ such that
\begin{equation}
\label{Deltaest}
\min\left\{|\Delta_{0\ell}(g)|, |\Delta_{1\ell}(g)|, |\Delta_{0\ell}(g)+(-1)^\ell\Delta_{1\ell}(g)|\right\}\ge Q(g),
\end{equation}
where $Q(g)\geq \frac{64}\pi \beta\|g\|+16\tilde\sigma$ is some constant that depends on $g$; the lower bound for $Q(g)$ is chosen in view of \eqref{epsest}, \eqref{noi_mea_est}, and Assumption \ref{as6}.

Observe that due to Assumptions \ref{as4} and \ref{as6}  
at least one of $\Delta_{\cdot\ell}$, $\Delta_{\cdot(\ell+2)}$, $\Delta_{\cdot(\ell+4)}$ does not detect a burst. 
Therefore, if  $\Delta_{\cdot\ell}$ does detect a burst,  we can unambiguously determine whether $\chi_{\ell} = 1$ or $\chi_{\ell+2} = 1$, i.e.~which of the two possible intervals { $[(\ell-1)\beta,(\ell+1)\beta)$ or $[(\ell+1)\beta,(\ell+3)\beta)$} contains $t_*$. More precisely, if 
  $\Delta_{\cdot\ell}$ and $\Delta_{\cdot(\ell+2)}$ both detect a burst, then $\chi_{\ell+2} = 1$; if $\Delta_{\cdot\ell}$ detects a burst and $\Delta_{\cdot(\ell+2)}$ does not, then $\chi_{\ell} = 1$.
  
 Once a burst is detected, say, by $\Delta_{\cdot\ell}$, 
 we find an approximation $t(g)$ of $t_*$ from the equality 
 \begin{equation}\label{etg}
     e^{-\frac{\pi i}\beta t(g)} = \frac{\Delta_{1\ell}(g)}{\Delta_{0\ell}(g)}\cdot\left|\frac{\Delta_{0\ell}(g)}{\Delta_{1\ell}(g)} \right|.
 \end{equation}
 Observe that due to \eqref{Delta} for $k =0$ and $k=1$ we would have $t_* = t(g)$ in the case when $\sigma = L = 0$ and \eqref{thirdcond} holds.
In any case, we let  
 \begin{equation}\label{tg}
 t(g) = \begin{cases}
 (\ell+2\chi_{\ell+2})\beta-\frac\beta\pi\arg\frac{\Delta_{1\ell}(g)}{\Delta_{0\ell}(g)}-\beta\cdot\text{sign}\left(-\frac\beta\pi\arg\frac{\Delta_{1\ell}(g)}{\Delta_{0\ell}(g)}\right) ,~ \ell\text{~is odd}  \\
  (\ell+2\chi_{\ell+2})\beta-\frac\beta\pi\arg\frac{\Delta_{1\ell}(g)}{\Delta_{0\ell}(g)}, ~\ell\text{~is even}
 \end{cases}
 \end{equation} 
 {where  $\arg(x)\in(-\pi,\pi]$, $x\in\mathbb C$, and $\text{sign}(x)=\begin{cases}
 1, & x>0;\\
 0, & x=0;\\
 -1,& x<0.
 \end{cases}$ 
}

 Using the same reasoning, we also let 
 \begin{equation}\label{fg}
   \mathfrak f(g) = (-1)^{\chi_{\ell+2}}\frac{\Delta_{0\ell}^2(g)}{\Delta_{0\ell}(g)+(-1)^\ell\Delta_{1\ell}(g)};
   \end{equation}
 the denominator in \eqref{fg} explains a seemingly odd term in the threshold \eqref{Deltaest}. To reiterate, $t(g)$ and $\mathfrak f(g)$ defined by \eqref{tg} and \eqref{fg}, respectively, solve equations \eqref{Delta} with $k=0$ and $k=1$ in the case when $\sigma = L =0$.
 
 We now approximate $t_*$ by $\tilde t$ -- a (weighted) average of $t(g)$ and $f_*$ by $\tilde f = S_{\mathcal G}\mathfrak f$. 
 We use a weighted average to determine $\tilde t$  in order to account for a strengthened version of \eqref{thirdcond}. More precisely, we average only the values $t(g)$ for which 
 \begin{equation*}\label{cosest}
     \left|1 -  \left|\frac{\Delta_{0\ell}(g)}{\Delta_{1\ell}(g)} \right|\cdot\frac{\Delta_{1\ell}(g)}{\Delta_{0\ell}(g)}\right| =
     \left|1 - e^{-\frac{\pi  i}\beta t(g)} \right| =
     2\left|\sin\left(\frac\pi{2\beta}t(g)\right) \right| \ge 1, 
 \end{equation*}
 i.e.~when $|t(g) - (\ell+2\chi_{\ell+2})\beta| \le \frac{2\beta}3$; otherwise, the burst will be detected by $\Delta_{\cdot(\ell-1)}$, $\Delta_{\cdot(\ell+1)}$ or $\Delta_{\cdot(\ell+3)}$ (this explains how interlacing the tuples of measurements works).

The result is summarized in Algorithm \ref{alg2} and its derivation is now complete.
It is clear that applying it for $\ell\in\N$ gives an approximation for all observed bursts.
Thus, the proof of Theorem \ref{Thm: perturbation1} will be complete once we provide estimates for the approximation errors.

\subsubsection{Estimating the approximation error.}
In this section, we assume that $\Delta_{\cdot\ell}$ detected a burst (for some $\ell\in\N$) and estimate the errors of the recovery of the time $t_*$ and shape $f_*$ of the burst. 

First, let us note that if the background source $\eta$ is a constant and there are no measurement acquisition errors, we have $\epsilon_{k\ell}(g) = 0$ and $\mu_{k\ell}(g)=0$ in \eqref{Delta}, and thus the reconstruction is exact.

Secondly, let us estimate the error $|\tilde t - t_*|$. According to the idea of interlacing the tuples of measurements described above, we assume that
\begin{equation}\label{tstar}
    |t_* - (\ell+2\chi_{\ell+2})\beta| \le \frac{2\beta}3.
\end{equation}
From \eqref{Delta}, \eqref{epsest}, \eqref{noi_mea_est} and \eqref{Deltaest}, 
we have

\begin{equation}\label{fracest}
\begin{split}
\left|\frac{\Delta_{1\ell}(g)}{\Delta_{0\ell}(g)} - e^{-\frac{\pi i}\beta t_*}\right| &\leq\left|\frac{e^{-\frac{\pi i}\beta t_*}\epsilon_{0\ell}(g) -  \epsilon_{1\ell}(g)}{\Delta_{0\ell}(g)} \right|+ \left|\frac{e^{-\frac{\pi i}\beta t_*}\mu_{0\ell}(g) -  \mu_{1\ell}(g)}{\Delta_{0\ell}(g)} \right|\\
&\le \frac{32L\beta^2\|g\|+8\pi\tilde\sigma }{\pi Q(g) } 
\end{split}
\end{equation}
and 
\begin{equation}
\begin{split}
   & 1- \frac{32 L \beta^2\|g\|+8\pi\tilde\sigma }{\pi Q(g) }\le \left|\frac{\Delta_{1\ell}(g)}{\Delta_{0\ell}(g)}\right|\\
    \le&1+  \left|\frac{e^{\frac{\pi i}\beta t_*}\epsilon_{1\ell}(g)-\epsilon_{0\ell}(g)}{\Delta_{0\ell}(g)}\right|+\left|\frac{e^{\frac{\pi i}\beta t_*}\mu_{1\ell}(g)-\mu_{0\ell}(g)}{\Delta_{0\ell}(g)} \right|\\ \le& 1+ \frac{32  L \beta^2\|g\|+8\pi\tilde\sigma }{\pi Q(g) }.
\end{split}
\end{equation}
It then follows
 from \eqref{etg} that  
\begin{equation}
\begin{split}
2\left| \sin\left({\frac{\pi}{2\beta}(t(g) - t_*)}\right)\right|=& \left| e^{-\frac{\pi i}\beta t(g)} - e^{-\frac{\pi i}\beta t_*}\right| \\
=&  \left|\frac{\Delta_{1\ell}(g)}{\Delta_{0\ell}(g)}\cdot \left|\frac{\Delta_{0\ell}(g)}{\Delta_{1\ell}(g)}\right| - e^{-\frac{\pi i}\beta t_*}\right| \\
 \le &\left|\frac{\Delta_{1\ell}(g)}{\Delta_{0\ell}(g)} - e^{-\frac{\pi i}\beta t_*}\right|+ \left|1-\left|\frac{\Delta_{0\ell}(g)}{\Delta_{1\ell}(g)}\right| \right|\cdot \left|\frac{\Delta_{1\ell}(g)}{\Delta_{0\ell}(g)}\right| \\ 
 \le& \frac{32 L \beta^2\|g\|+8\pi\tilde\sigma }{\pi Q(g) }+  \left|1-\left|\frac{\Delta_{1\ell}(g)}{\Delta_{0\ell}(g)}\right| \right| \\
 \le& 2\cdot\frac{32 L \beta^2\|g\|+8\pi\tilde\sigma }{\pi Q(g) }.
\end{split}
\end{equation}
Since $Q(g)\geq  \frac{64}\pi L \beta^2\|g\|+16\tilde\sigma$ (due to Assumption \ref{as6}), we have 
\begin{equation}\label{qest}
  0\le   \frac{32 L \beta^2\|g\|+8\pi\tilde\sigma }{\pi Q(g) }\leq \frac{1}{2}.
\end{equation}
Thus, we get
\[\begin{aligned}
|t(g) - t_*| \le& \frac{2\beta}\pi\cdot \arcsin \left(\frac{32 L \beta^2\|g\|+8\pi\tilde\sigma }{\pi Q(g) }\right)\\
\le& \frac{2\beta}{3} \cdot \frac{32 L \beta^2\|g\|+8\pi\tilde\sigma }{\pi Q(g) }\\
=& \frac{2}{3} \cdot \frac{32 L \beta^3\|g\|+8\pi\beta\tilde\sigma }{\pi Q(g)}.
\end{aligned}
\]
Averaging over $g$ may only improve the result, which yields the inequality
\[
|\tilde t - t_*| \le \frac{2}{3} \cdot \frac{32 L \beta^3\|g\|+8\pi\beta\tilde\sigma }{\pi Q(g)}.
\]

Next, let us estimate the relative error of approximation of $\langle f_*, g\rangle$ when $\mathfrak f(g)\neq 0$. From \eqref{Delta} and \eqref{fg}, we get (since  $\chi_\ell-\chi_{\ell+2}=(-1)^{\chi_{\ell+2}}$)
\[
\frac{\mathfrak f(g)}{\Delta_{0\ell}(g)} =\frac{(-1)^{\chi_{\ell+2}}}{\Delta_{0\ell}(g)+(-1)^\ell\Delta_{1\ell}(g)}\left[
(-1)^{\chi_{\ell+2}}\left(1+(-1)^\ell e^{-\frac{\pi i}\beta t_*}\right)\left\langle f_*, g\right\rangle
    + \epsilon_{0\ell}(g)  + \mu_{0\ell}(g)\right],
\]
which implies
\[\left((-1)^{\ell}+ \frac{\Delta_{1\ell}(g)}{\Delta_{0\ell}(g)}\right)\left(\mathfrak f(g) -  \left\langle f_*, g\right\rangle\right)+\left(\frac{\Delta_{1\ell}(g)}{\Delta_{0\ell}(g)} - e^{-\frac{\pi i}\beta t_*}\right)\left\langle f_*, g\right\rangle =(-1)^{\chi_{\ell+2}+\ell}( \epsilon_{0\ell}(g)  + \mu_{0\ell}(g)).
\]
In view of \eqref{epsest}  and \eqref{noi_mea_est}, it follows that
\[
\begin{aligned}
&\left|\left((-1)^{\ell}+ \frac{\Delta_{1\ell}(g)}{\Delta_{0\ell}(g)}\right)\left(\mathfrak f(g) -  \left\langle f_*, g\right\rangle\right) + \left(\frac{\Delta_{1\ell}(g)}{\Delta_{0\ell}(g)} - e^{-\frac{\pi i}\beta t_*}\right)\left\langle f_*, g\right\rangle \right|\\
=&|(-1)^{\chi_{\ell+2}+\ell}( \epsilon_{0\ell}(g)  + \mu_{0\ell}(g))|\le\frac{16}\pi L\beta^2\|g\|+4\tilde\sigma.
\end{aligned}
\]
Using \eqref{fracest} and 
\begin{equation}\label{fracest2}
\begin{split}
   & \left|(-1)^\ell+\frac{\Delta_{1\ell}(g)}{\Delta_{0\ell}(g)} \right| 
   = \left|(-1)^\ell+ e^{-\frac{\pi i}\beta t_*}-e^{-\frac{\pi i}\beta t_*}+\frac{\Delta_{1\ell}(g)}{\Delta_{0\ell}(g)}\right| \\
   \ge& \left|(-1)^\ell+ e^{-\frac{\pi i}\beta t_*}\right|-\left|-e^{-\frac{\pi i}\beta t_*}+\frac{\Delta_{1\ell}(g)}{\Delta_{0\ell}(g)}\right|\\
    \ge& 1 - \frac{32 L \beta^2\|g\|+8\pi\tilde\sigma }{\pi Q(g) },
    \end{split}
\end{equation}
which holds due to \eqref{tstar},  
we get  
\begin{equation}\label{needed}
\begin{aligned}
    &\left(1- \frac{32 L \beta^2\|g\|+8\pi\tilde\sigma }{\pi Q(g) }\right)\left|\mathfrak f(g) -  \left\langle f_*, g\right\rangle\right| -  \frac{32 L \beta^2\|g\|+8\pi\tilde\sigma }{\pi Q(g) } \left|\left\langle f_*, g\right\rangle\right|\\ \le&\frac{16}\pi L\beta^2\|g\|+4\tilde\sigma.
    \end{aligned}
\end{equation}

From \eqref{Delta}, \eqref{tstar} and \eqref{qest}, we get 
\[\left|\left\langle f_*, g\right\rangle\right|\ge |\Delta_{0\ell}|-(|\nu_{0\ell}|+|\mu_{0\ell}(g)|)\geq Q(g)-\left(\frac{16}\pi L\beta^2\|g\|{+4\tilde\sigma}\right) \geq 3\cdot\left(\frac{16}\pi L\beta^2\|g\|{+4\tilde\sigma}\right) > 0.
\]
It follows from the above inequalities and \eqref{needed} that
\[
\begin{aligned}
&\left|\frac{\mathfrak f(g) -  \left\langle f_*, g\right\rangle}{\left\langle f_*, g\right\rangle}\right| \\
\le& 2\cdot \frac{32 L \beta^2\|g\|+8\pi\tilde\sigma }{\pi Q(g) }   +\left(\frac{16}\pi L\beta^2\|g\|{+4\tilde\sigma}\right)  \frac{2\pi}{ \pi Q(g)-(16 L \beta^2\|g\|+4\pi\tilde\sigma)}\\
=& \frac{4(16 L \beta^2\|g\|+4\pi \tilde\sigma)}{\pi Q(g) } +\frac{2(16 L \beta^2\|g\|+4\pi \tilde\sigma)}{\pi Q(g)-(16 L \beta^2\|g\|+4\pi\tilde\sigma)}.
\end{aligned}
\]
Using \eqref{qest}  
once again, we thus have
\[\left|\frac{\mathfrak f(g) -  \left\langle f_*, g\right\rangle}{\left\langle f_*, g\right\rangle}\right| \le  \frac{20(16 L \beta^2\|g\|+4\pi \tilde\sigma)}{3\pi Q(g) }. 
\]
Let $Q(g)=\frac{64}{\pi}\beta\|g\|+16\max\{\tilde\sigma,\sqrt{\tilde\sigma}\}$. Then if $\mathfrak f(g)\neq 0$,
\[
|\tilde t - t_*| \le \frac{2}{3} \cdot \frac{32 L \beta^3\|g\|+8\pi\beta\tilde\sigma }{\pi Q(g)}\leq \frac{1}{3}L\beta^2+ \frac{\beta}{3}\cdot\min\{1,\sqrt{\tilde\sigma}\}  
\]
and
\[
\left|\frac{\mathfrak f(g) -  \left\langle f_*, g\right\rangle}{\left\langle f_*, g\right\rangle}\right|\leq\frac{20(16 L \beta^2\|g\|+4\pi \tilde\sigma)}{3\pi Q(g) }\leq  \frac{5}{3}(L\beta+\min\{1,\sqrt{\tilde\sigma}\}).
\]
Finally, Assumption \ref{as2} yields 
\eqref{fest}
and Theorem \ref{Thm: perturbation1} follows.

\section{Simulation}\label{secsim}

 In order to evaluate our theoretical analysis for the algorithms, we compare the estimates of the bursts and the ground truth for a specific dynamical system. More specifically, we consider the burst function $f(x,t)$ of the form
\[
\begin{aligned}
f(x,t)
=& f_1(x)\delta(t-0.25)+f_2(x)\delta(t-1.5)+f_3(x)\delta(t-2.75)
\end{aligned}
\] 
with $f_1(x)= 0.35\sin(x)$, $f_2(x)=\cos(x)$, $f_3(x)=1+\sin(x)$, $t\in[0,4],x\in[0,1]$ and the dynamical system $$\frac{d u}{dt}=Au+f+\eta \text{ with } Au(x,t)=-x^2u(x,t).$$
The choice of the operator $A$ is motivated by its connection (via the Fourier transform) with the second derivative operator on a Paley-Wiener space.  Let  $g_1(x)=1$, $g_2(x)=x$, $g_3(x)=x^2$ be the sensor functions.  The measurements are generated according to \eqref{eqn:meas_dic} for Algorithm \ref{alg:direct} and \eqref{Samples} for Algorithm \ref{alg2} for some specific $\beta$. The goal is to estimate
the coefficients $\langle f_j,g\rangle$ for $j\in\{1,2,3\}$, $g(x)\in\{g_1(x),g_2(x),g_3(x)\}$ and burst time slots $\{0.25,1.5,2.75 \}$. In this simulation, we test on two types of background source $\eta$: \[\eta(x,t)=\cos(Ltx)+C \text{ and } \eta(x,t)=x\exp(-Lt)+C.\] 

  \begin{figure}[h!]
    \centering
    \includegraphics[width=0.45\textwidth]{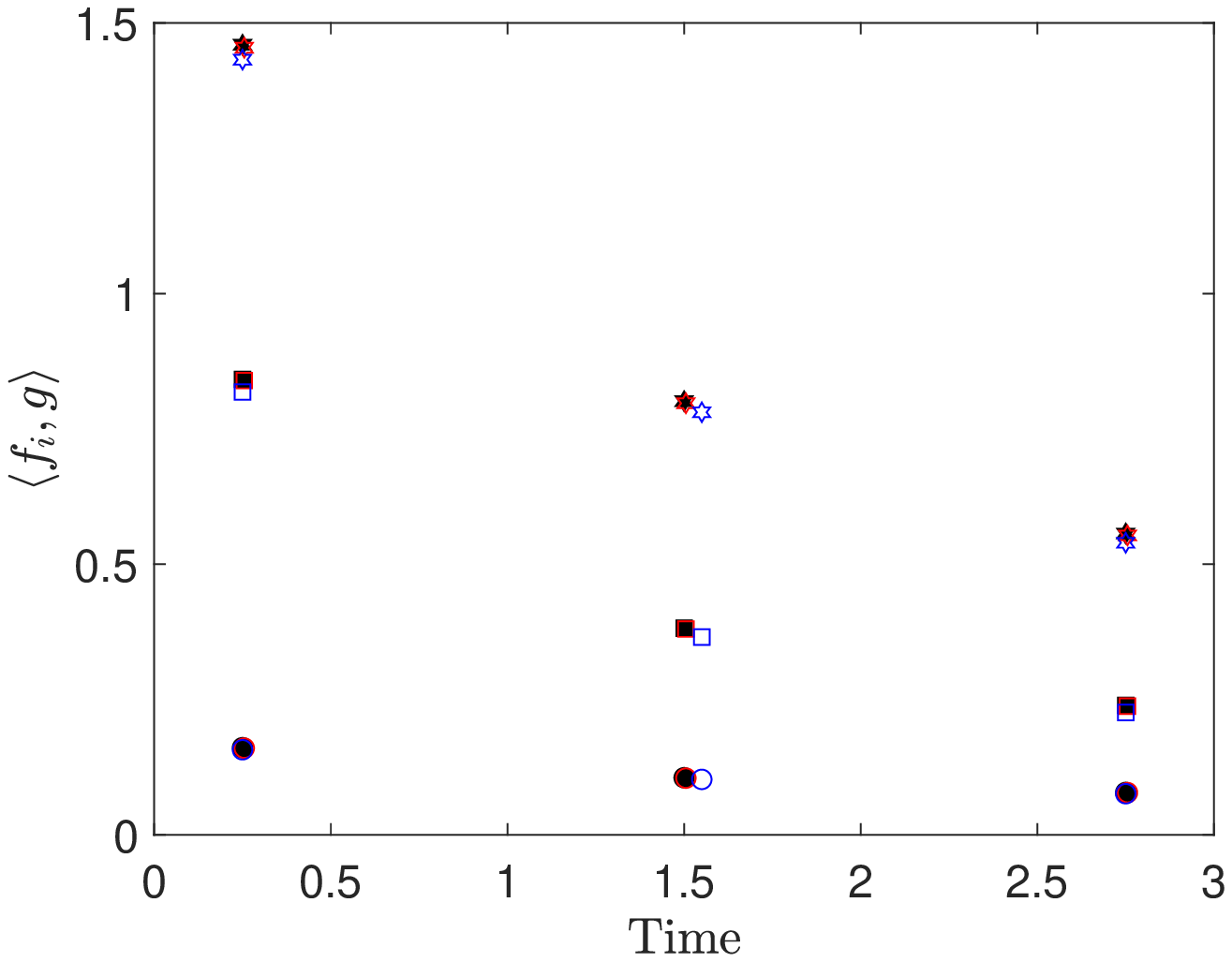}
     \includegraphics[width=0.45\textwidth]{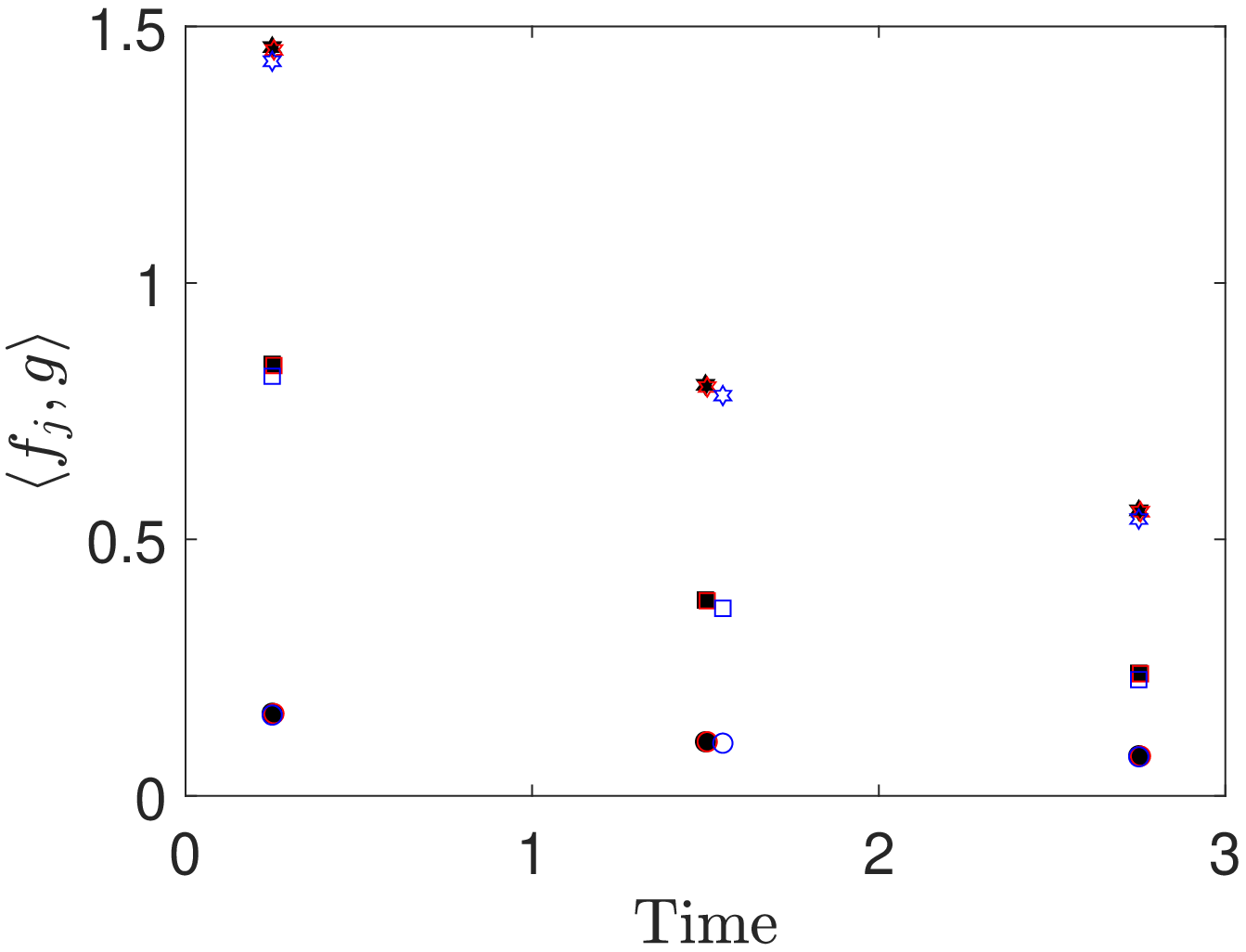}
    \caption{ \footnotesize{Plot for the bursts: The black points stand for the ground truth  of the bursts; the red points stand for the estimated bursts with the parameters $L=10^{-2},\sigma=10^{-4},\beta=10^{-2}$; the blue points stand for the estimated bursts with the parameters $L=10^{-2},\sigma=10^{-4},\beta=10^{-1}$. 
    The background sources are   $\eta(x,t) = \cos(Ltx)+C$ for  left figure and $\eta=x\exp(-Lt)+C$ for right figure.  }}
    \label{fig:coeff_time_direct}
\end{figure}

First of all, we choose some specific parameters and plot estimates and ground truth in the same figure. The estimates for Algorithm \ref{alg:direct} for some specific parameters are shown in   Figure \ref{fig:coeff_time_direct}. As the estimates for Algorithm \ref{alg2} are visually indistinguishable from the ground truth for the same parameters as the ones for Algorithm \ref{alg:direct}, we leave  out the figure for Algorithm \ref{alg2}. {From Figure \ref{fig:coeff_time_direct}, we could see that the red points are overlapping with the black points and the blue points are pretty close to the black points} which demonstrates the accuracy of our algorithms.

To further understand the influences of the parameters  $\beta$,   $L$, $\sigma$ or $\widetilde{\sigma}$ and the noise categories on our algorithms, we have done some simulations for one varying parameter and fixed others.
In our simulation, we measured the 
error on the estimates of time by computing \[\sqrt{\sum_{j=1}^{3}|t_j-\tilde{t}_j|^2}\] and  the 
error on the estimates of $\langle f_j,g\rangle$ by    \[\sqrt{\sum_{j=1}^{3}|\langle f_j,g\rangle-\mathfrak{f}_j(g) |^2}\]
for different $L$, $\beta$ and $\sigma$ or $\widetilde{\sigma}$.

\begin{figure}[!h]
\vspace{-0.08in}
\centering
\subfloat
{\includegraphics[width=0.38\linewidth]{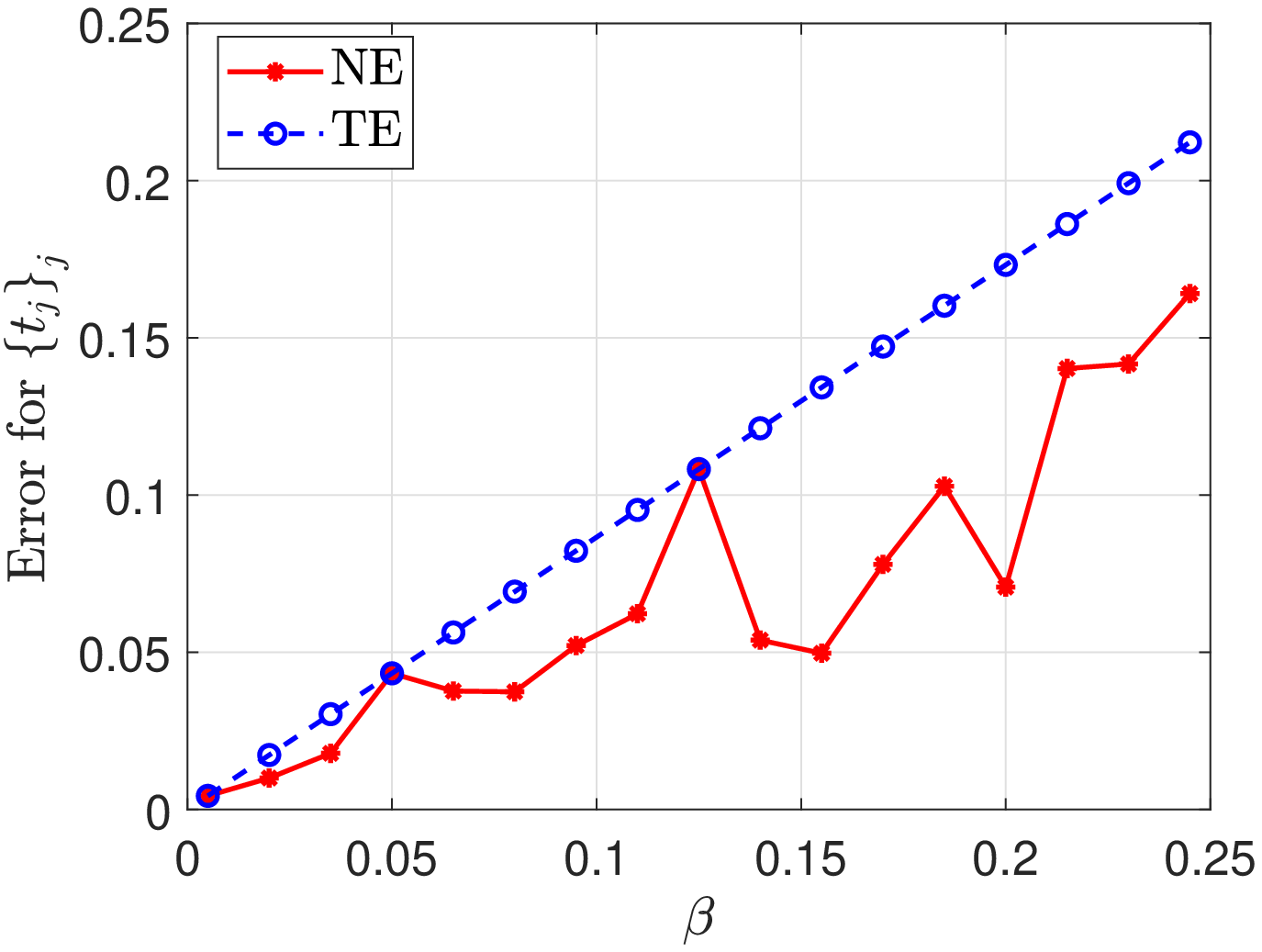} }
\subfloat
{\includegraphics[width=0.38\linewidth]{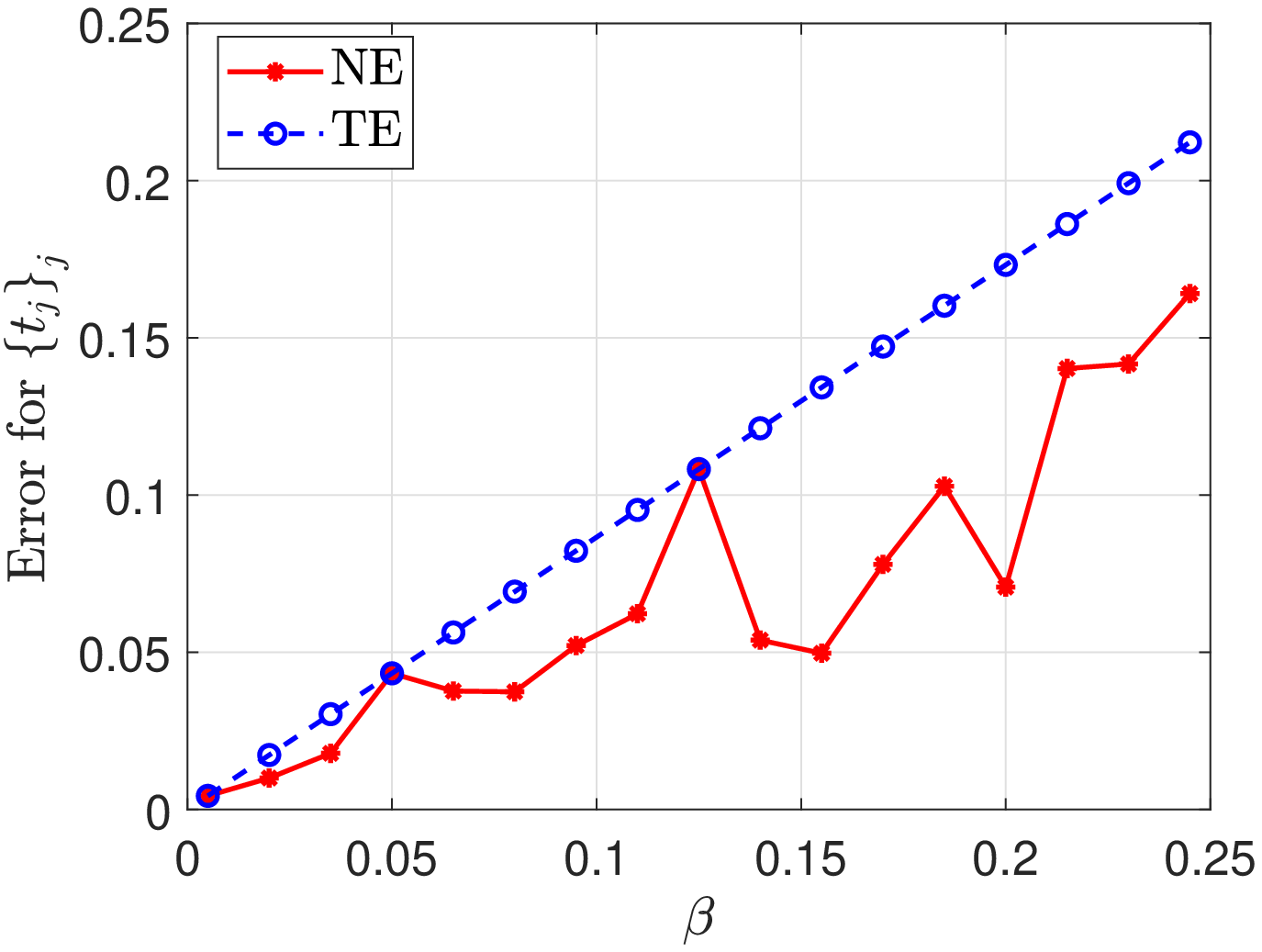} }
\hfill
\subfloat
{\includegraphics[width=0.38\linewidth]{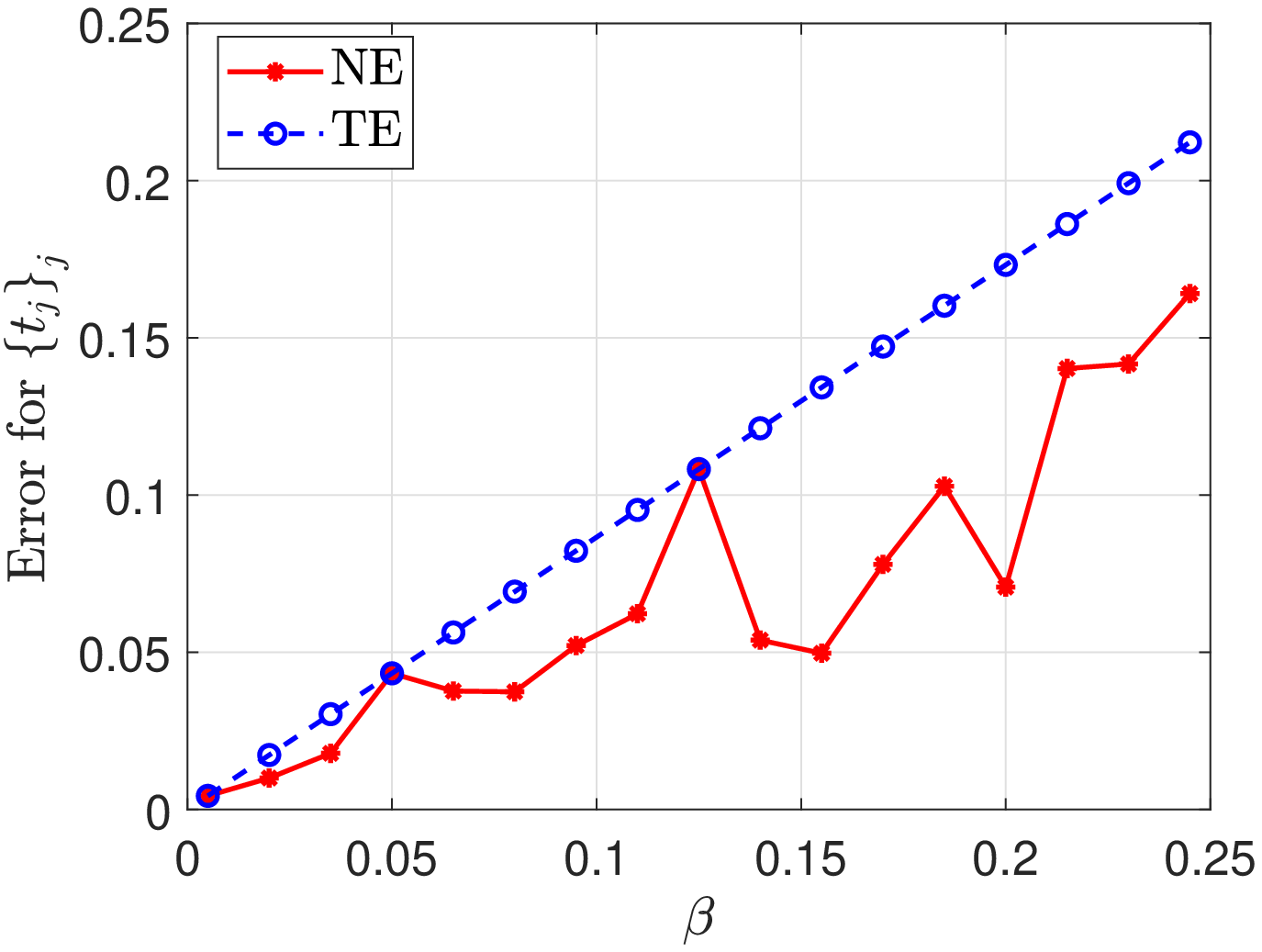}
}
\subfloat
{\includegraphics[width=0.38\linewidth]{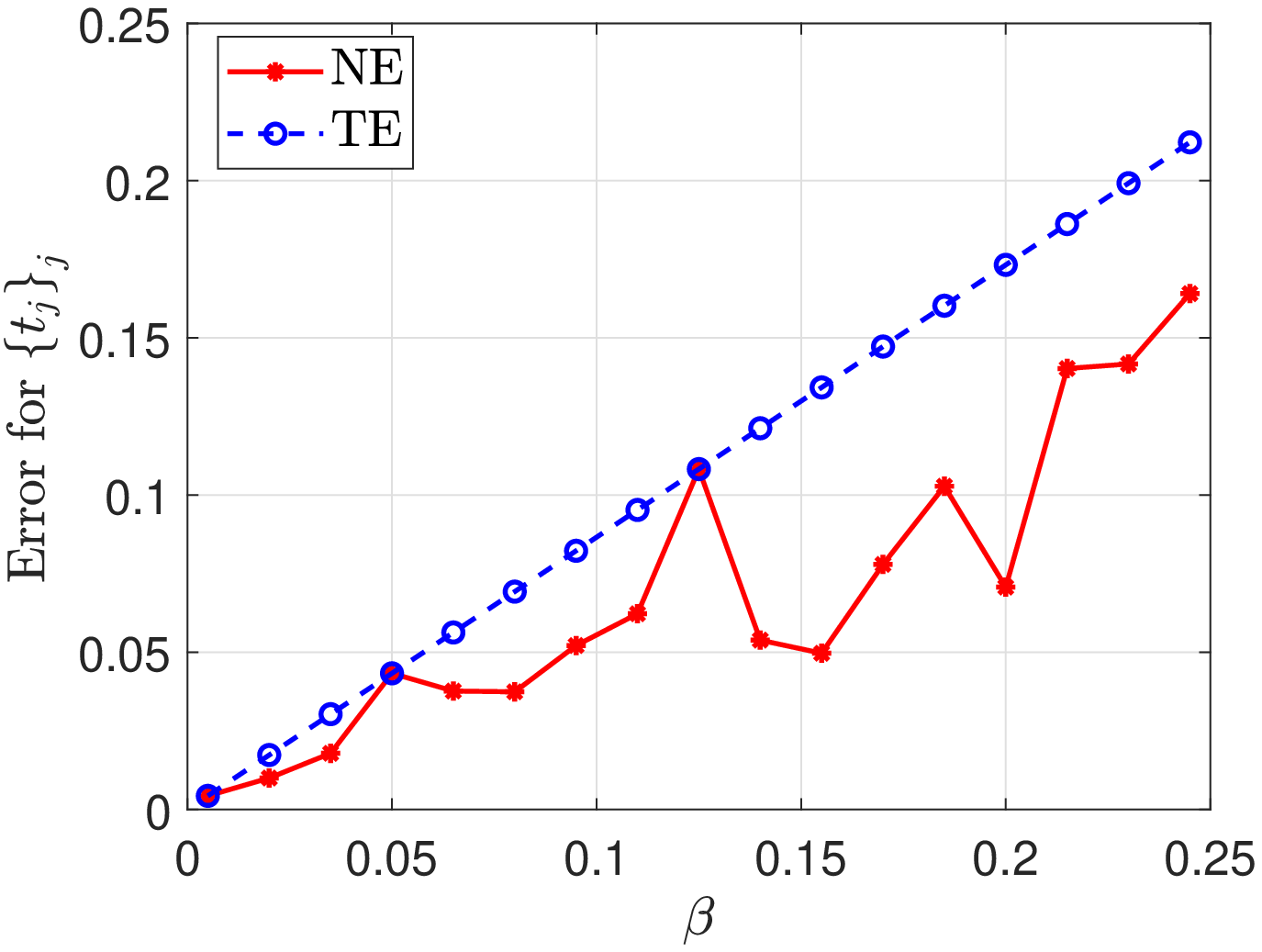}
}\hfill
\caption{ \footnotesize{The estimates of $t_j$~\textit{vs.}~$\beta$ for Algorithm \ref{alg:direct}: $L=0.01$. \textbf{First column:}   $\eta(x,t)= \cos(Ltx)+C$. \textbf{Second column:}   $\eta(x,t)= x\exp(-Lt)+C$. The noise variances $\sigma$ on the measurements are $\sigma=0,10^{-4}$ for the 1st, 2nd  row, respectively. NE and TE stand for the numerical error and the theoretical error respectively.}}
    \label{fig:time_vs_beta_direct}
\vspace{-0.05in}
\end{figure}

In Figures \ref{fig:time_vs_beta_direct}, \ref{fig:burst_vs_beta_direct}, \ref{fig:time_vs_beta_prony}, and \ref{fig:burst_vs_beta_prony}, we plot the relation between the errors on $t$ and $\langle f_j,g\rangle$ for $g(x)=x$ \textit{vs} the sampling time step $\beta$ by fixing the Lipschitz constant of the background source $L=0.01$ for different measurement variances $\sigma$ or $\widetilde{\sigma} =0,   10^{-4}$. These  figures 
show that our theoretical error bounds on the time and $\langle f_j,g\rangle$ are accurate which is close to the numerical error bound. Additionally, the performance of the estimates of the time is independent of the variance of the measurements when the burst can be detected. Meanwhile, as the variances of  measurements increase, the theoretical error bounds become worse.

For Figures \ref{fig:time_vs_lip_prony} and \ref{fig:burst_vs_lip_prony}, we fix $\beta=0.1$ and vary the Lipschitz constant $L$ of the background source $\eta(x,t)$. These figures  verify our theoretical bounds. We notice that the theoretical bound for Algorithm \ref{alg2}  provides a very accurate estimation for the time and $\langle f_j,g\rangle$ when the Lipschitz constant $L$ is small.

\begin{figure}[!h]
\vspace{-0.08in}
\centering
\subfloat
{\includegraphics[width=0.38\linewidth]{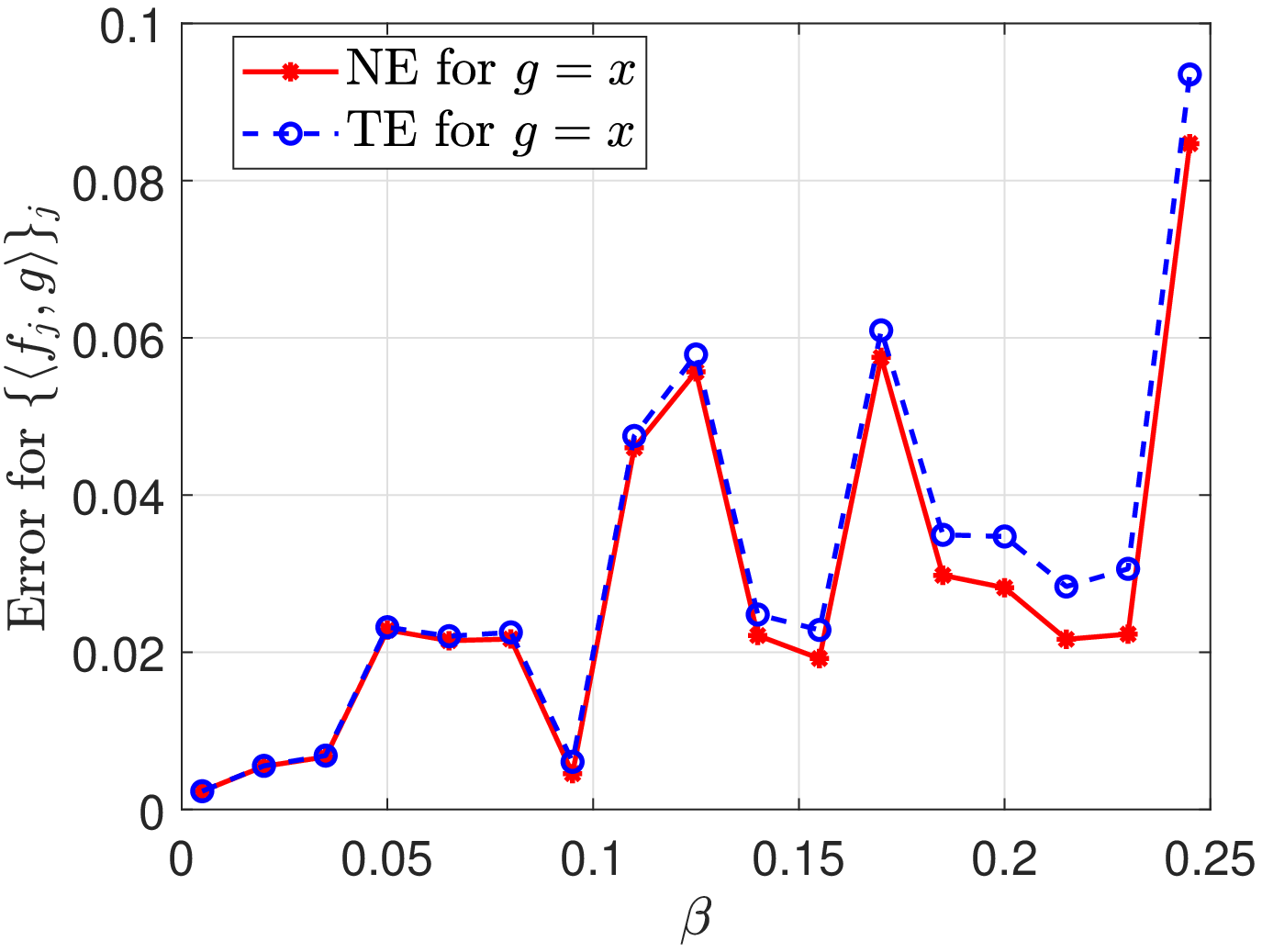} }
\subfloat
{\includegraphics[width=0.38\linewidth]{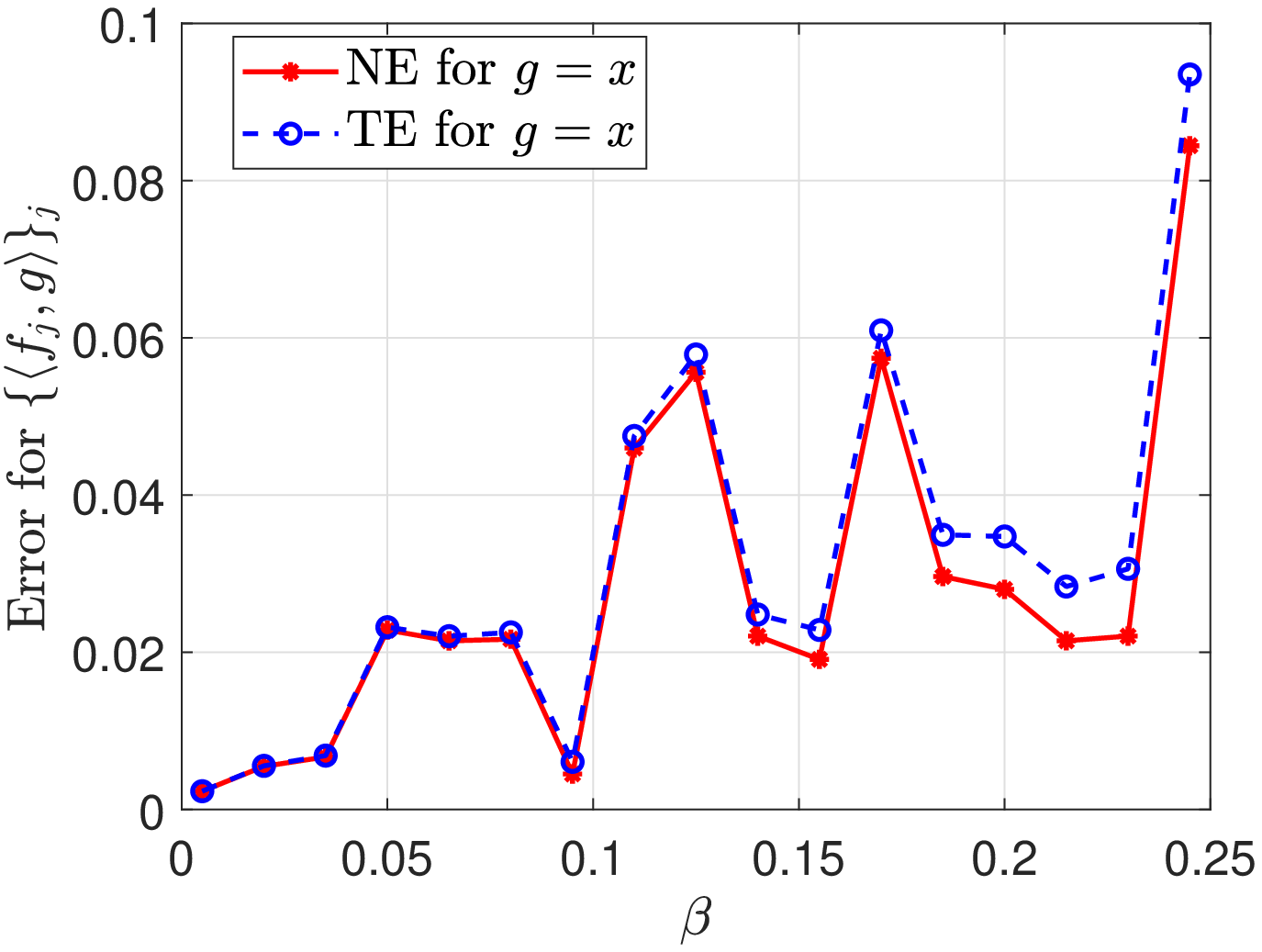} }
\hfill
\subfloat
{\includegraphics[width=0.38\linewidth]{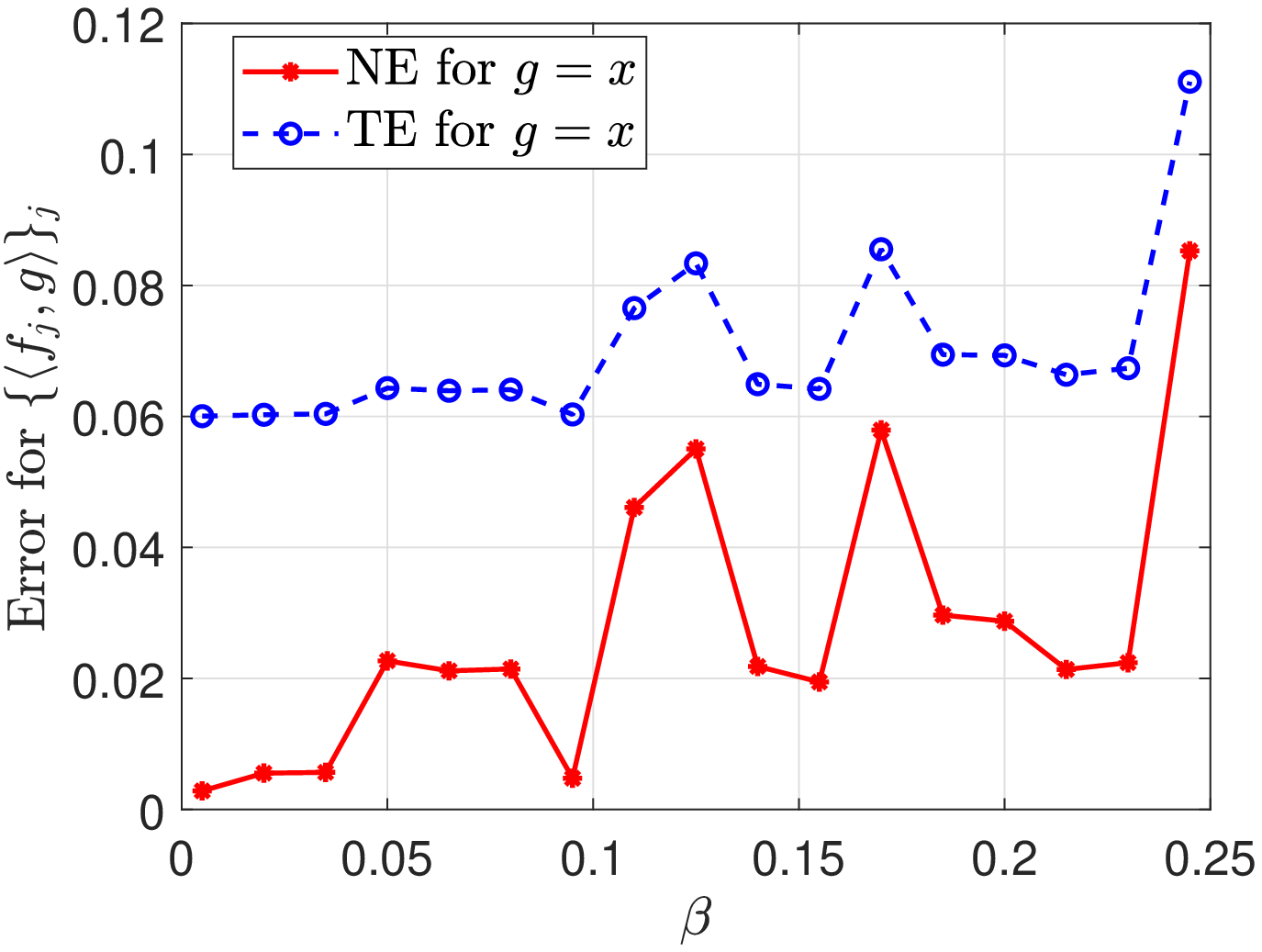}
}
\subfloat
{\includegraphics[width=0.38\linewidth]{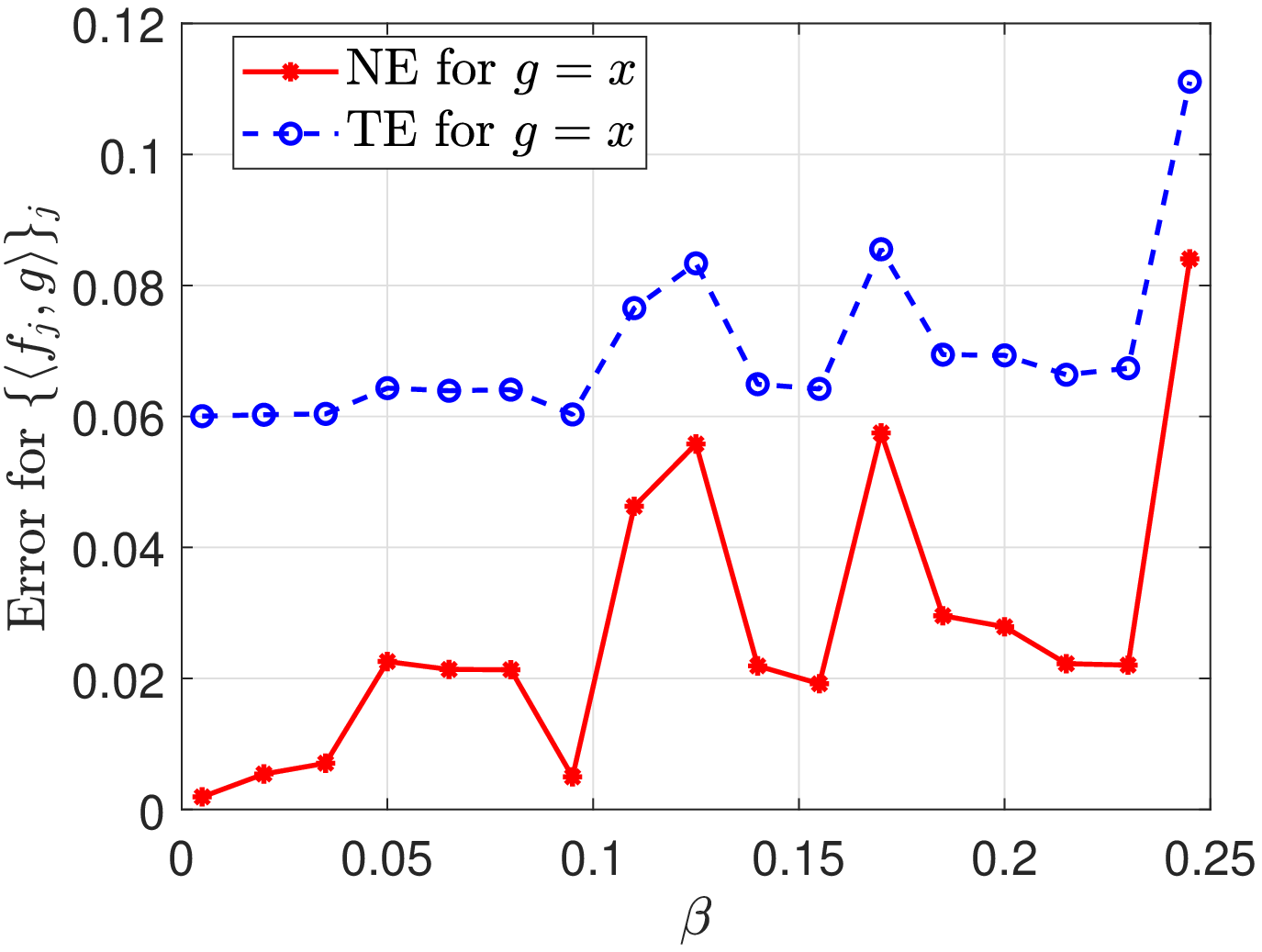}
}\hfill
\caption{ \footnotesize{The estimates of $\langle f_j,g\rangle$~\textit{vs.}~$\beta$ for Algorithm \ref{alg:direct}: $L=0.01$. \textbf{First column:}   $\eta(x,t)= \cos(Ltx)+C$. \textbf{Second column:}   $\eta(x,t)= x\exp(-Lt)+C$. The noise variances $\sigma$ on the measurements are $\sigma=0, 10^{-4}$ for the 1st, 2nd row, respectively.}}
    \label{fig:burst_vs_beta_direct}
\vspace{-0.05in}
\end{figure}

\begin{figure}[!h]
\vspace{-0.08in}
\centering
\subfloat
{\includegraphics[width=0.38\linewidth]{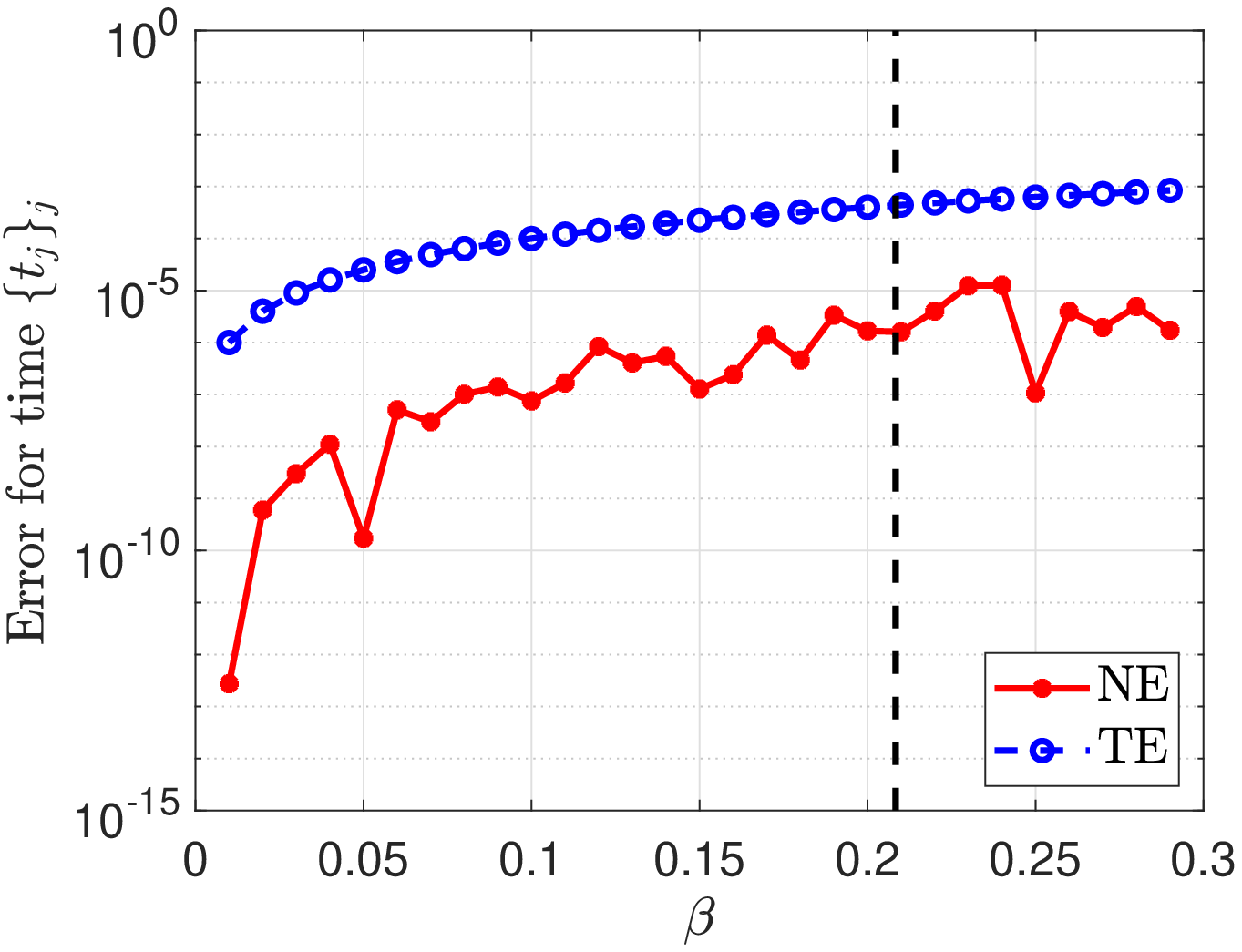}}
\subfloat
{\includegraphics[width=0.38\linewidth]{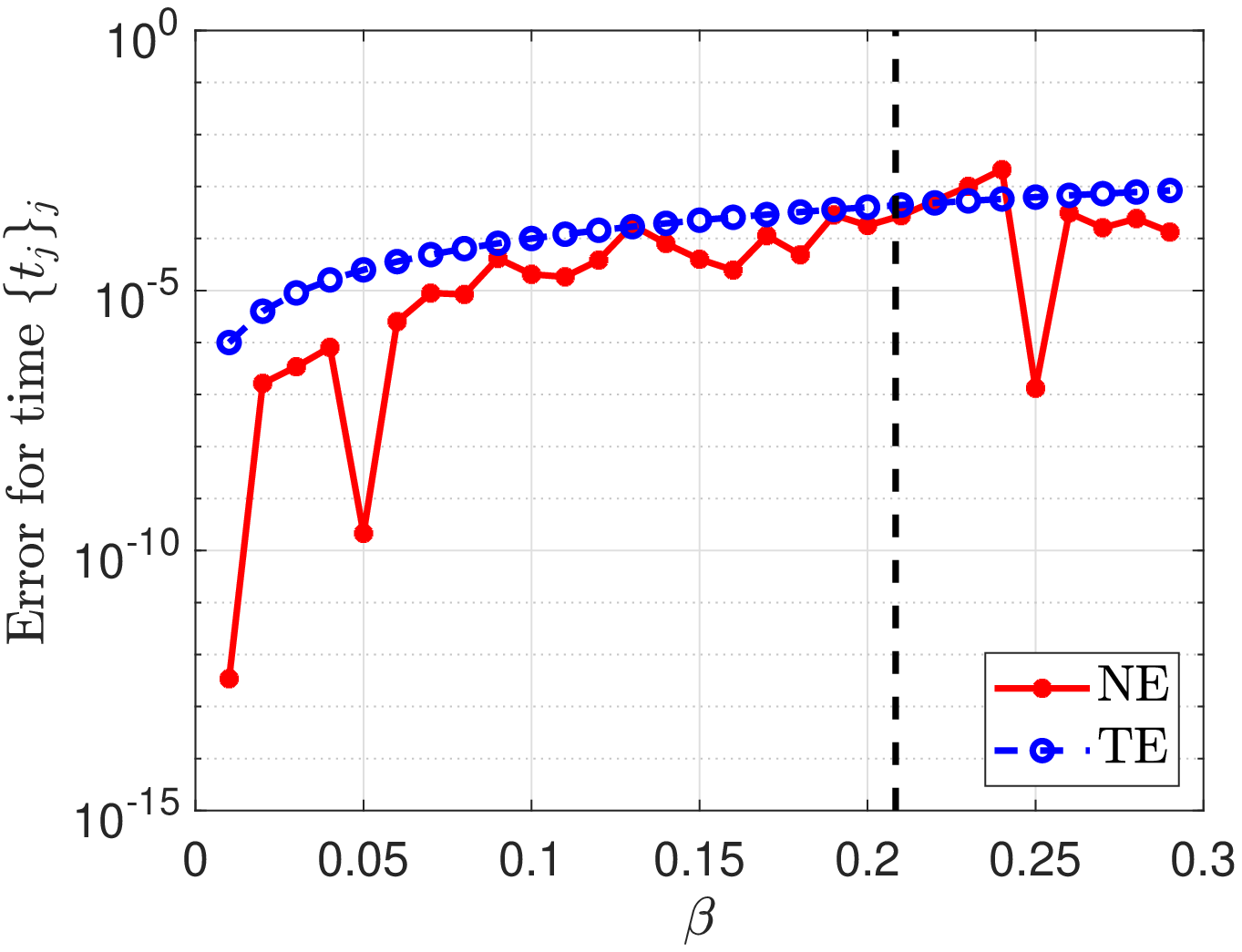}}\\

\subfloat
{\includegraphics[width=0.38\linewidth]{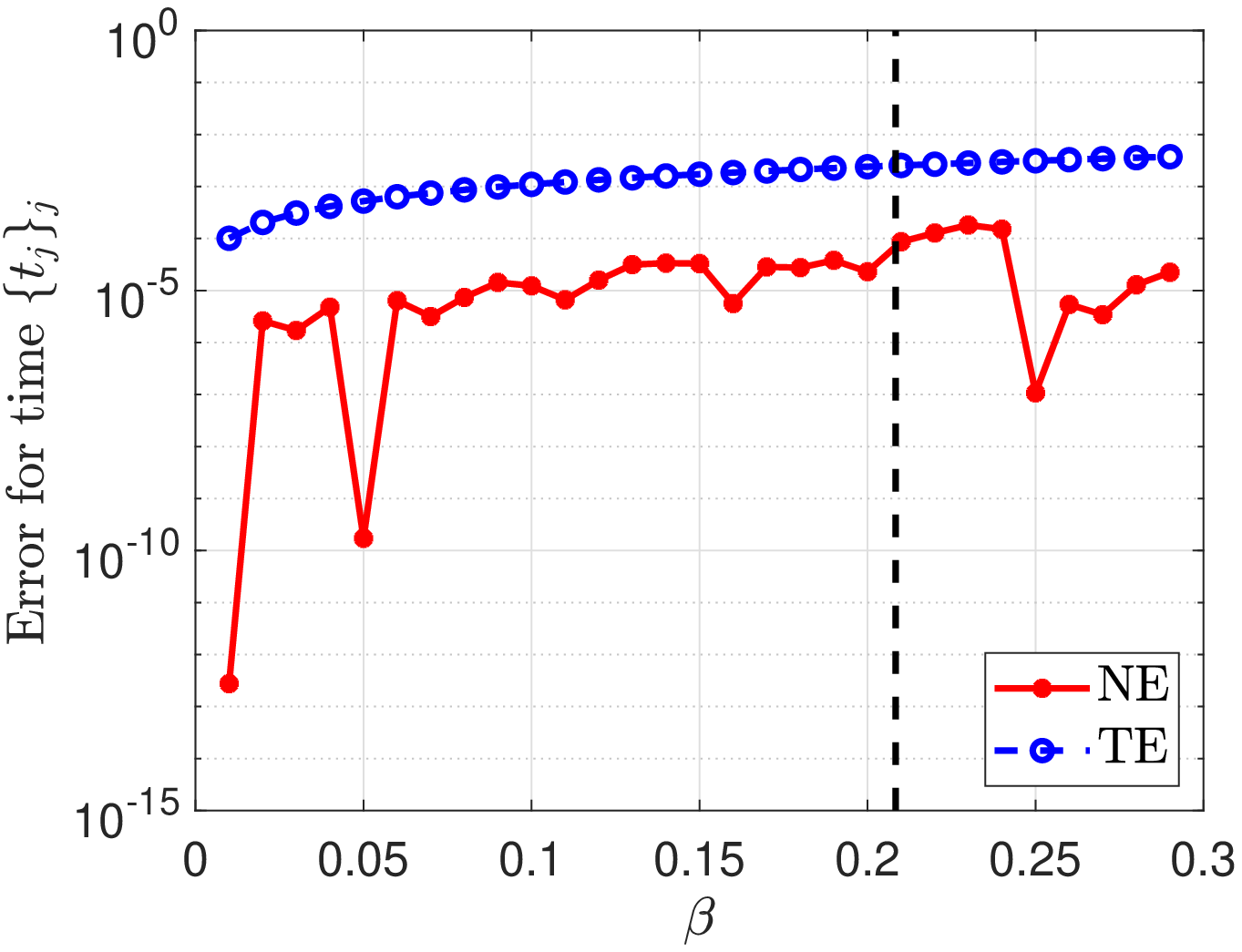}
}
\subfloat
{\includegraphics[width=0.38\linewidth]{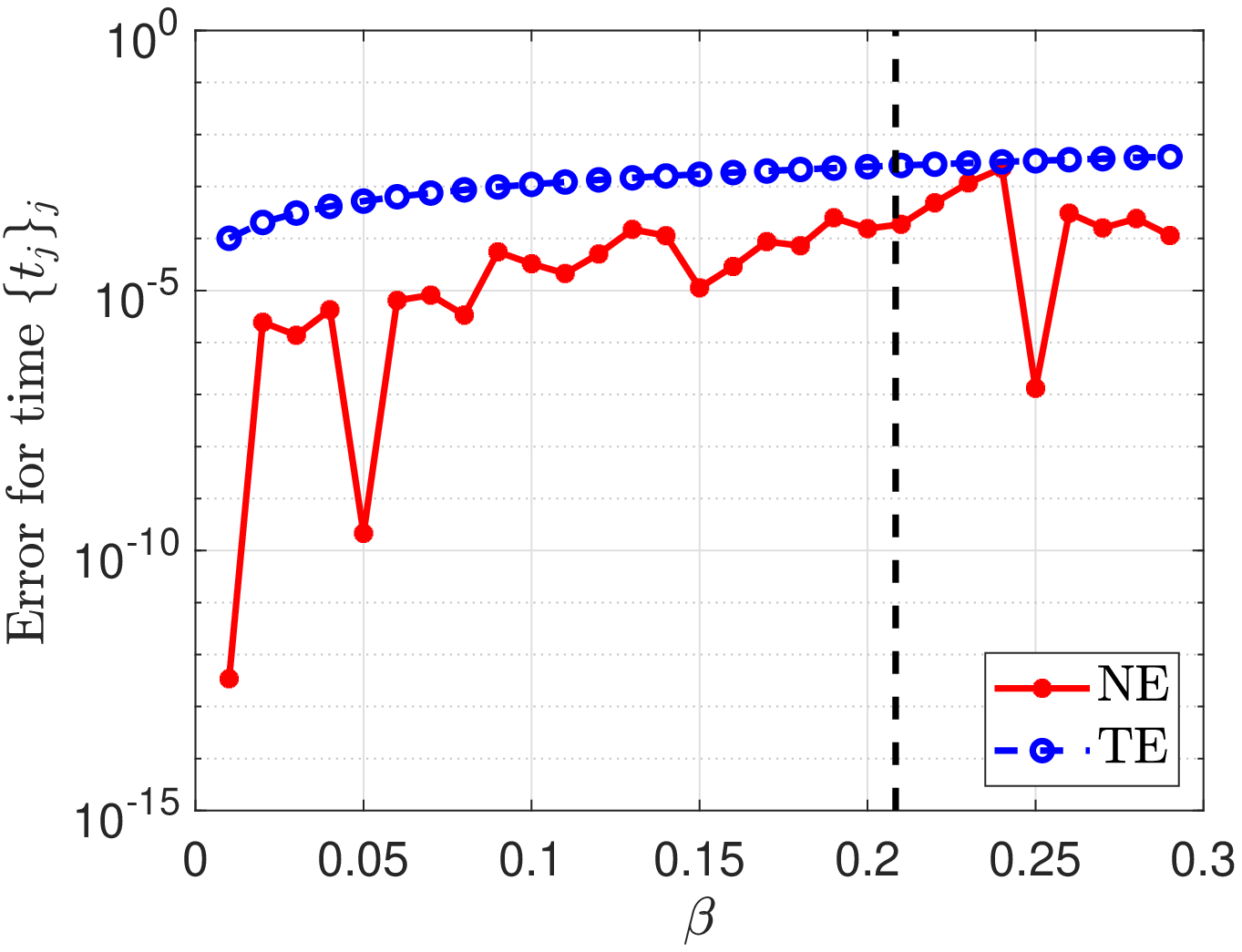}
}
\caption{ \footnotesize{The estimates of $t_j$~\textit{vs.}~$\beta$ for Algorithm \ref{alg2}: $L=0.01$. \textbf{First column:}  $\eta(x,t)= \cos(Ltx)+C$. \textbf{Second column:}   $\eta(x,t)= x\exp(-Lt)+C$. The noise variances $\sigma$ on the measurements are $\widetilde{\sigma}=0, 10^{-4}$ for the 1st, 2nd  row, respectively. NE and TE stand for the numerical error and the theoretical error respectively. The black vertical line represents the theoretical restrictions on $\beta$ see Assumption \ref{as4}. }}
    \label{fig:time_vs_beta_prony}
\vspace{-0.05in}
\end{figure}

\begin{figure}[H]
\vspace{-0.08in}
\centering
\subfloat
{\includegraphics[width=0.38\linewidth]{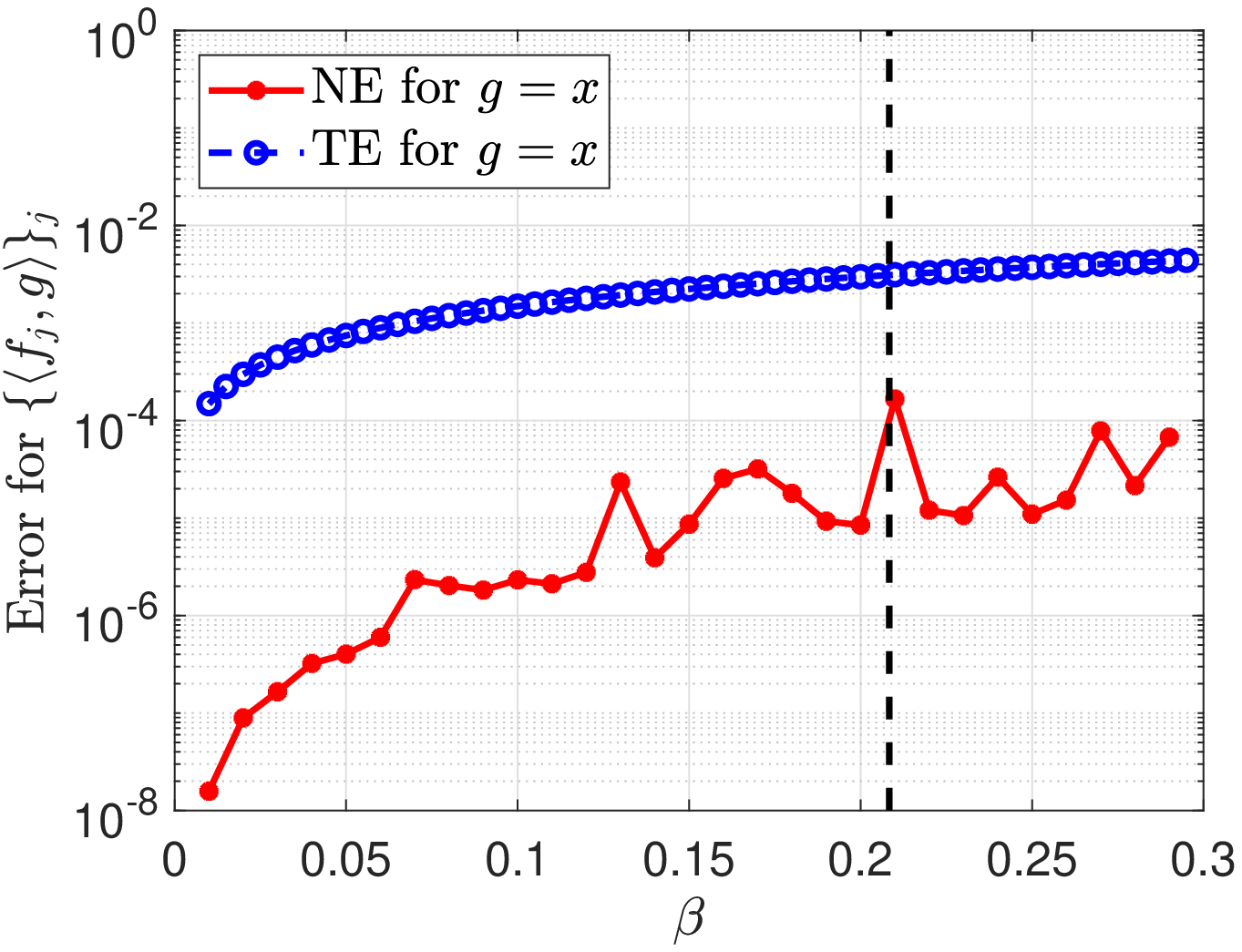} }
\subfloat
{\includegraphics[width=0.38\linewidth]{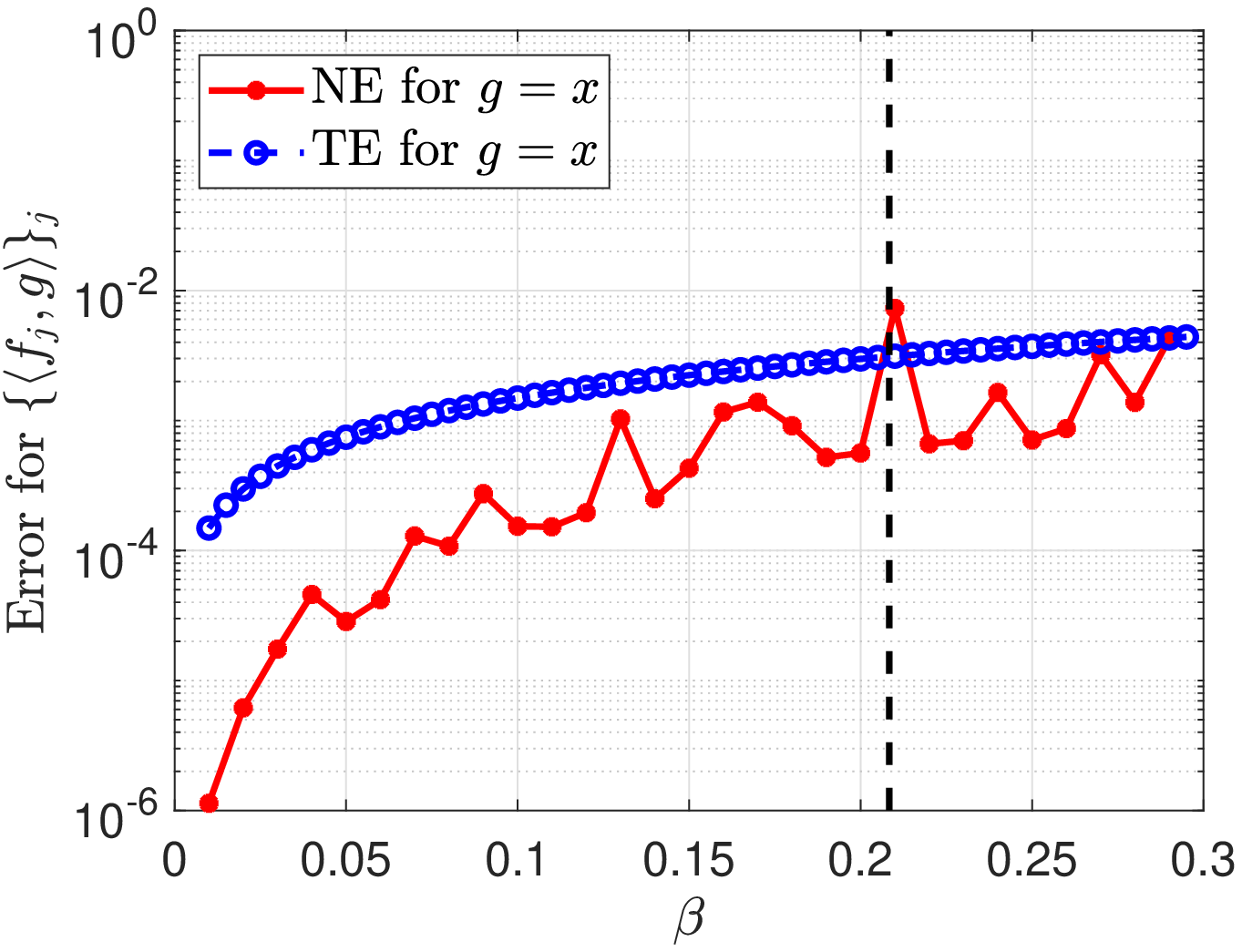} }
\hfill
\subfloat
{\includegraphics[width=0.38\linewidth]{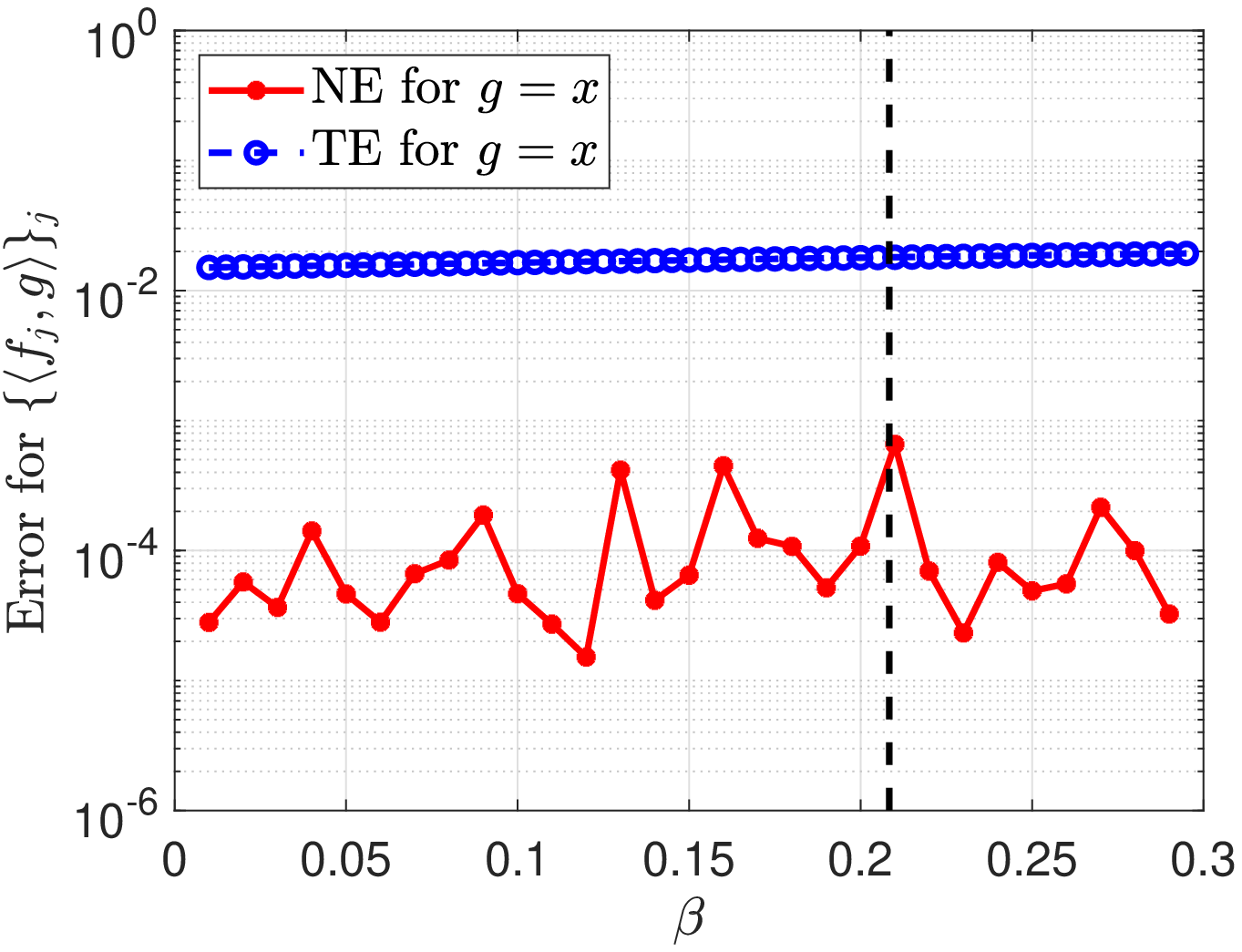}
}
\subfloat
{\includegraphics[width=0.38\linewidth]{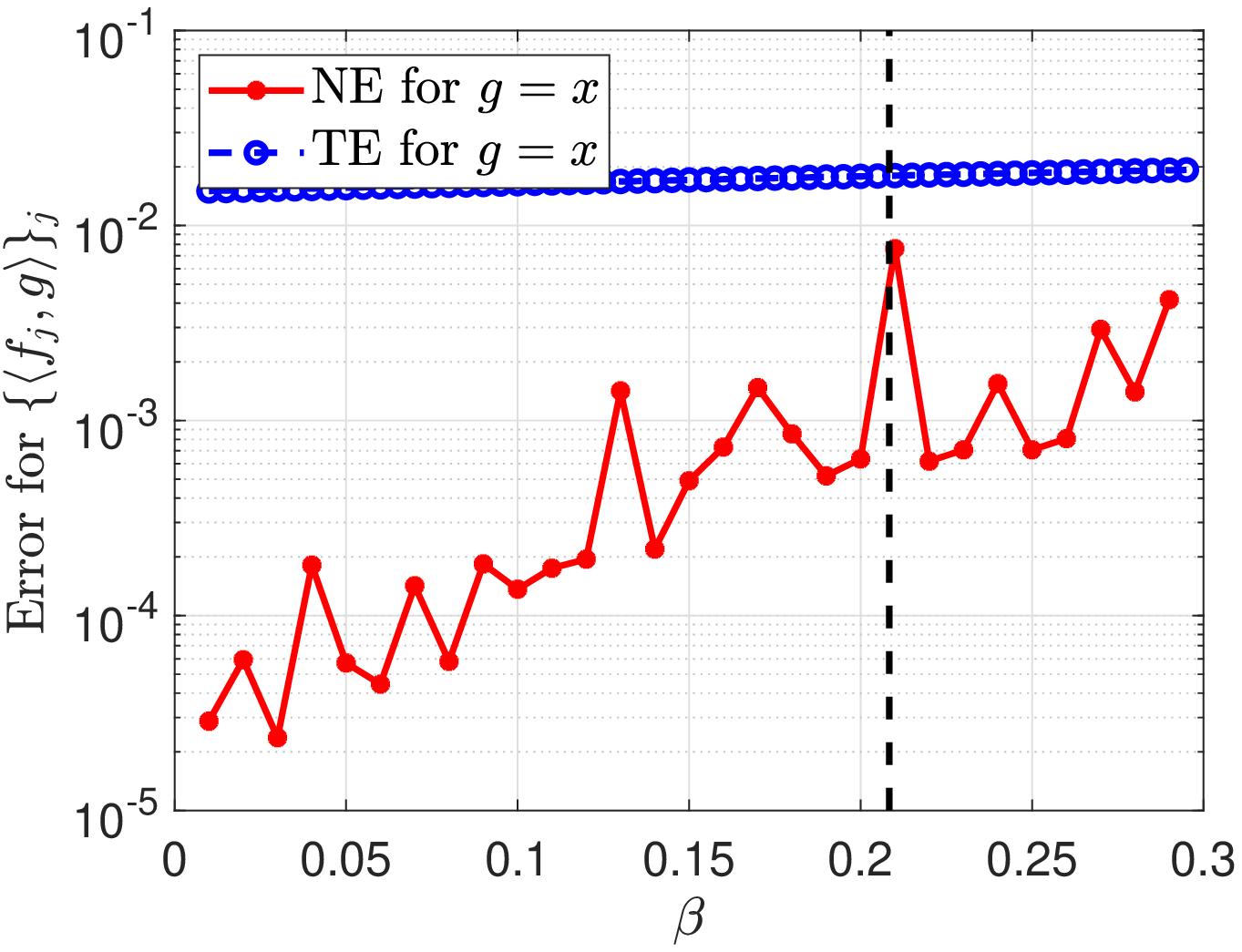}
}\hfill
\caption{ \footnotesize{The estimates of $\langle f_j,g\rangle$~\textit{vs.}~$\beta$ for Algorithm \ref{alg2}: $L=0.01$. \textbf{First column:}   $\eta(x,t)= \cos(Ltx)+C$. \textbf{Second column:}   $\eta(x,t)= x\exp(-Lt)+C$. The noise variances $\widetilde{\sigma}$ on the measurements are $\widetilde{\sigma}=0, 10^{-4}$ for the 1st, 2nd row, respectively. The black vertical line represents the theoretical restrictions on $\beta$ see Assumption \ref{as4}.}}
    \label{fig:burst_vs_beta_prony}
\vspace{-0.05in}
\end{figure}

 \begin{figure}[!h]
\vspace{-0.08in}
\centering
\subfloat
{\includegraphics[width=0.38\linewidth]{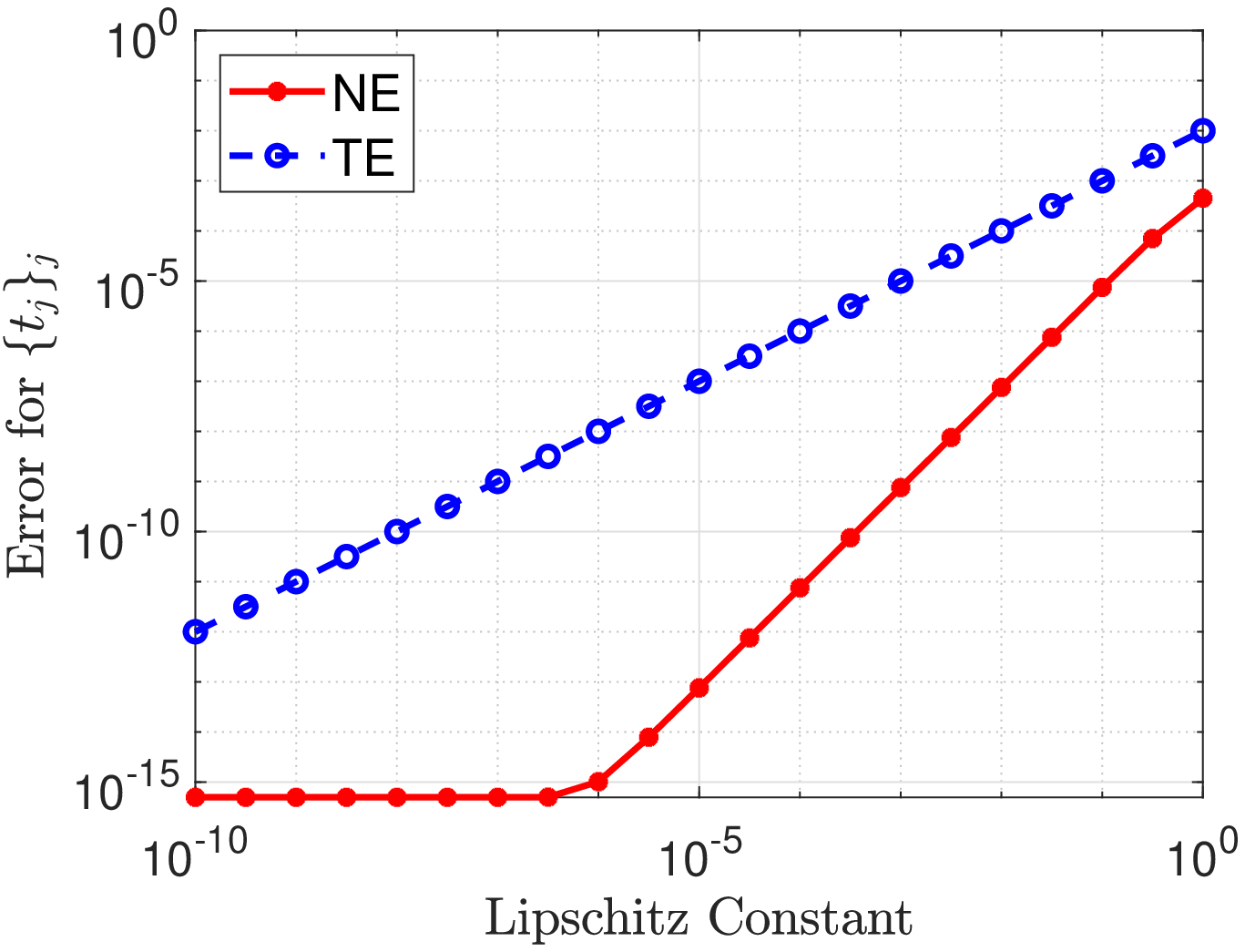}}
\subfloat
{\includegraphics[width=0.38\linewidth]{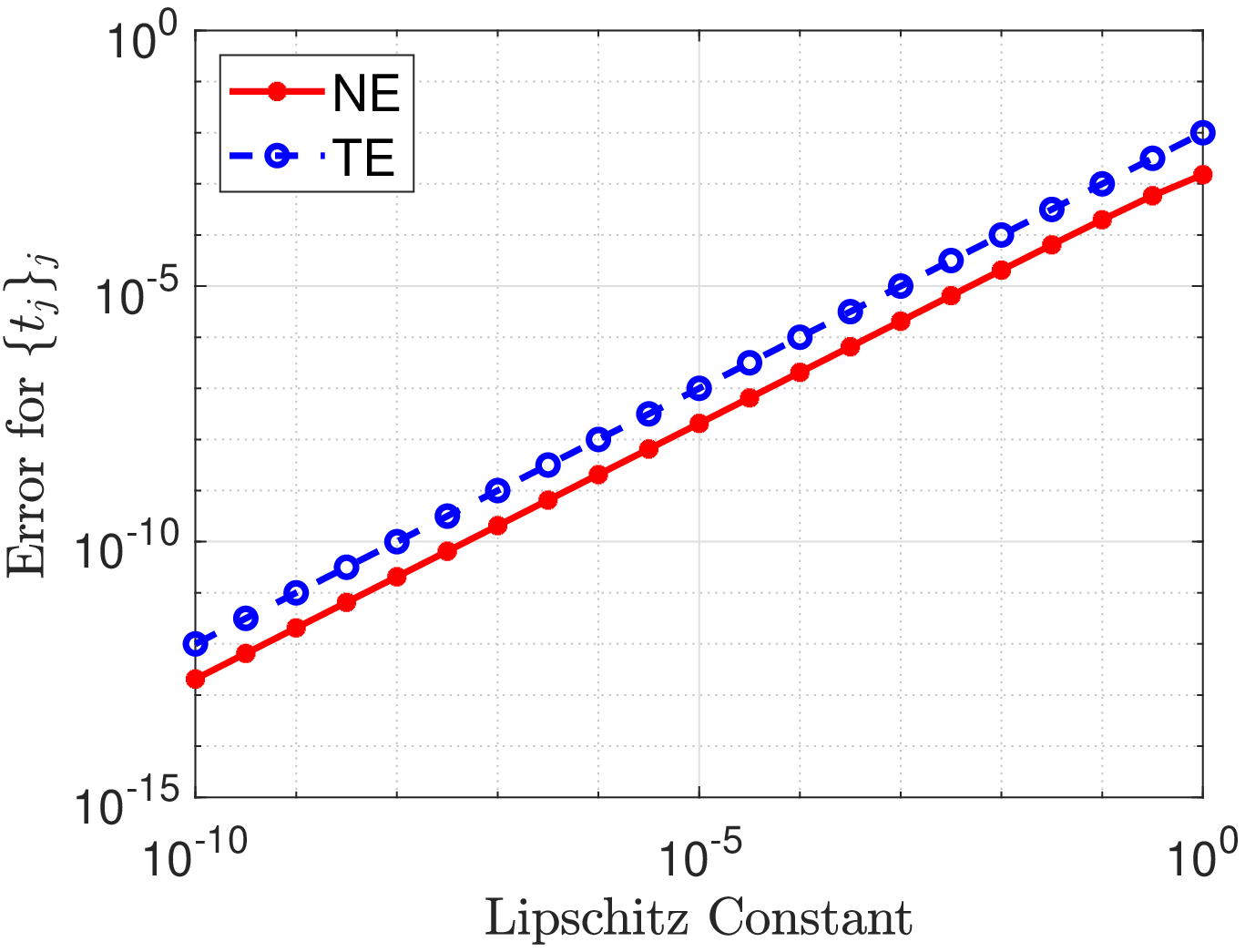}}\\
\subfloat
{\includegraphics[width=0.38\linewidth]{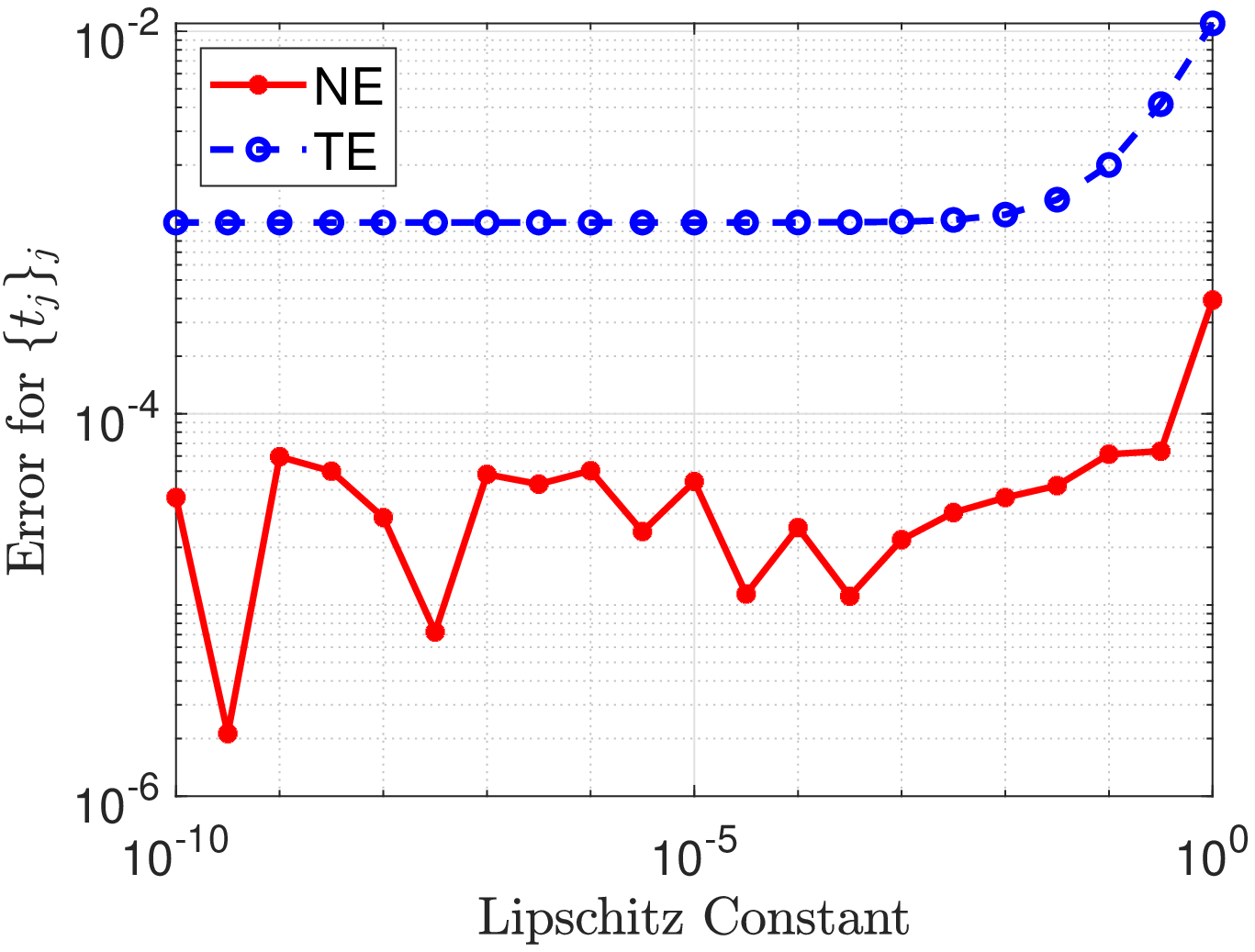}}
\subfloat
{\includegraphics[width=0.38\linewidth]{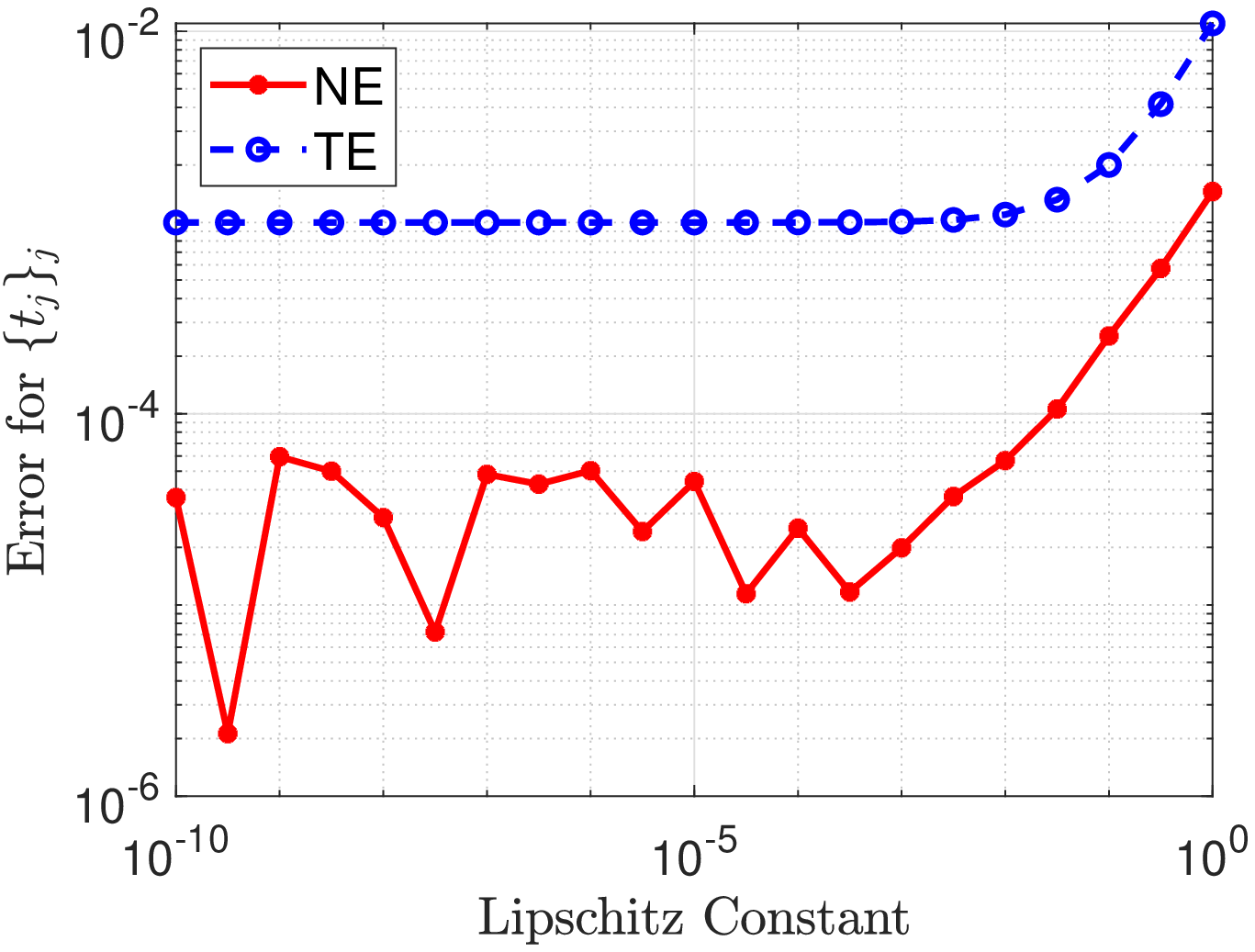}}
\caption{ \footnotesize{The estimates of $t_j$ \textit{vs.}  Lipschitz constant $L$  for Algorithm \ref{alg2}: $\beta=0.1$. \textbf{First column:} $\eta(x,t)= \cos(Ltx)+C$. \textbf{Second column:}  $\eta(x,t)= x\exp(-Lt)+C$. The noise variances $\widetilde{\sigma}$ on the measurements are $\widetilde{\sigma}=0,10^{-4}$ for the 1st, 2nd row, respectively. NE and TE stand for the numerical error and the theoretical error respectively.}}
    \label{fig:time_vs_lip_prony}
\vspace{-0.05in}
\end{figure}

\begin{figure}[H]
\vspace{-0.08in}
\centering
\subfloat
{\includegraphics[width=0.38\linewidth]{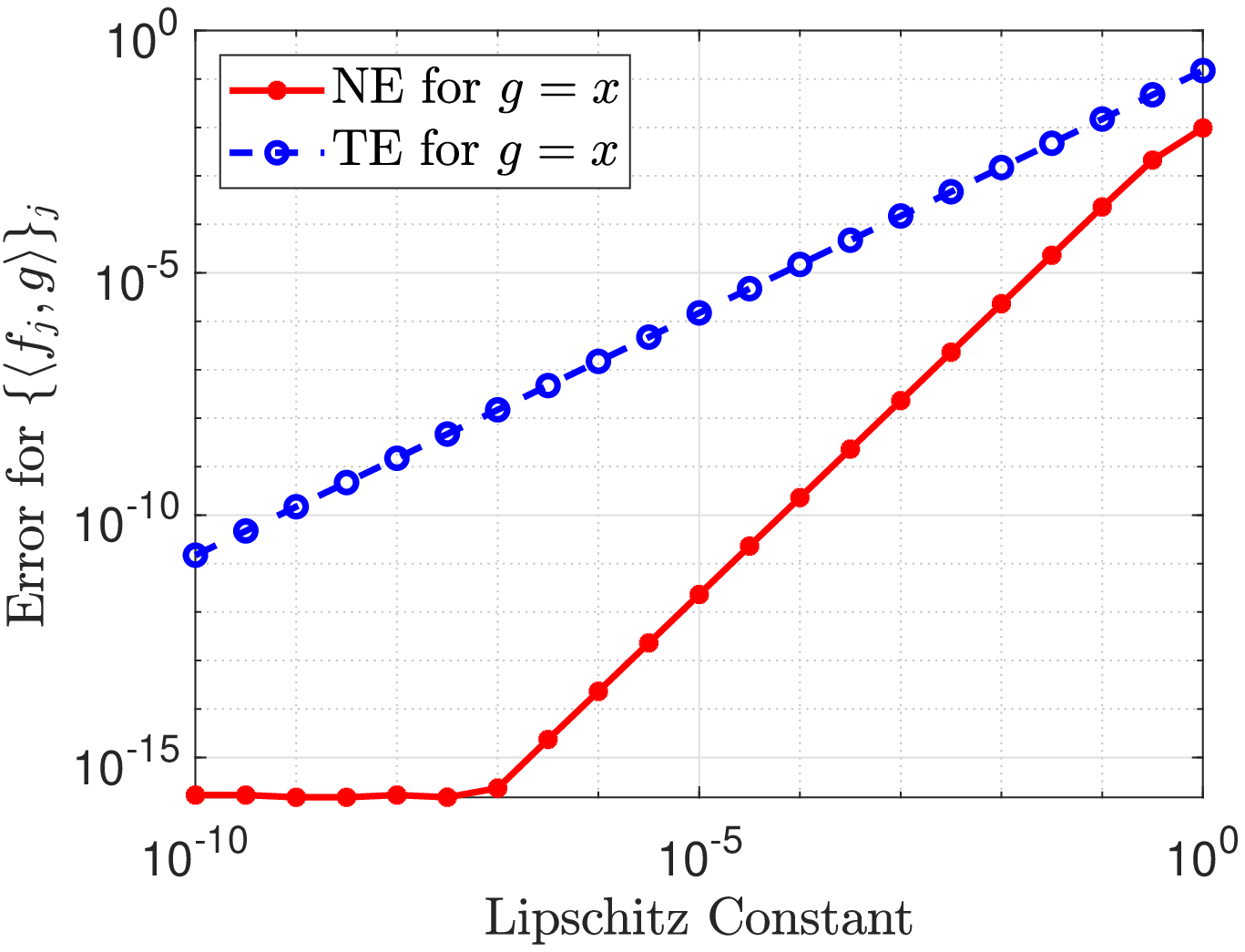}}
\subfloat
{\includegraphics[width=0.38\linewidth]{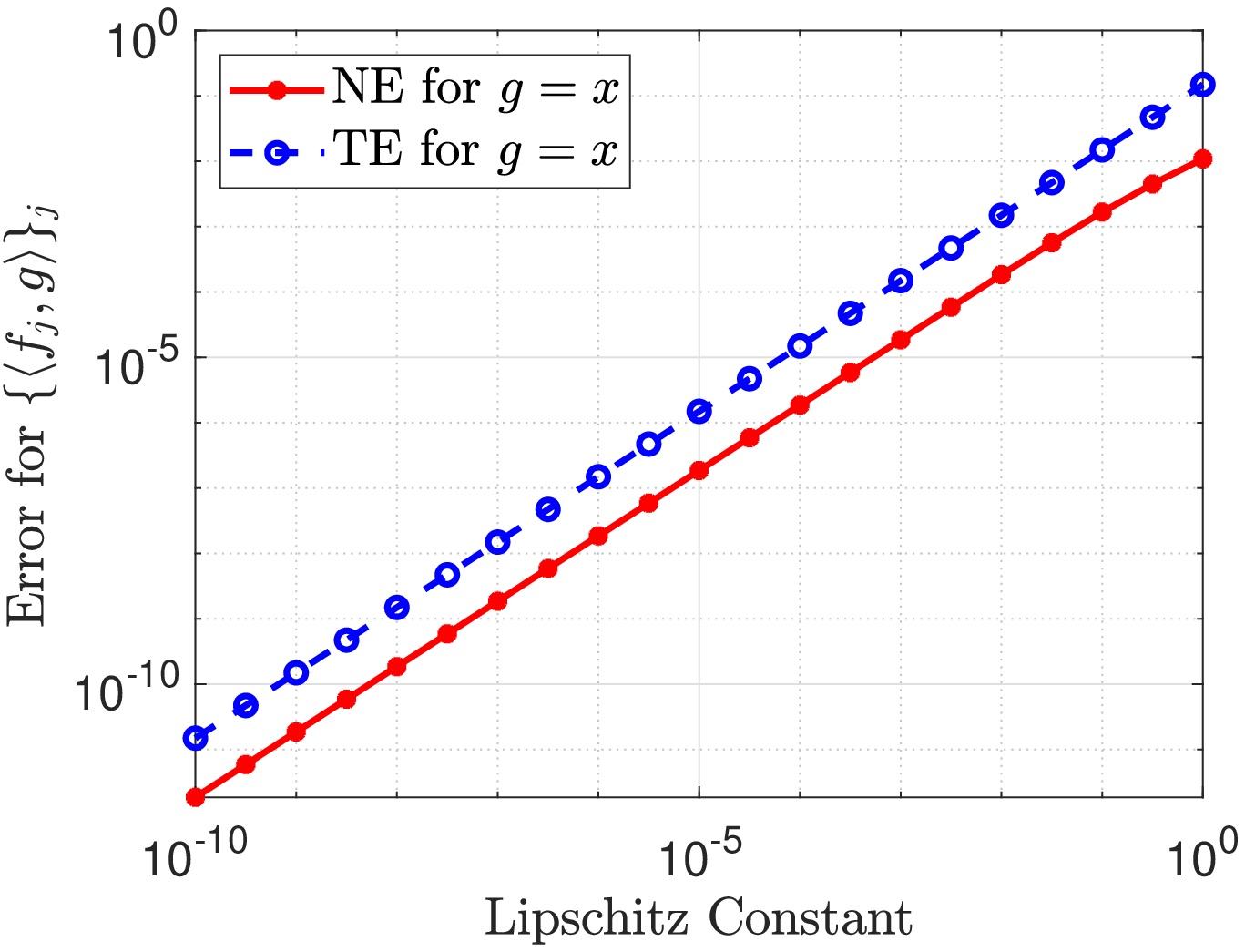}}
\\
\subfloat
{\includegraphics[width=0.38\linewidth]{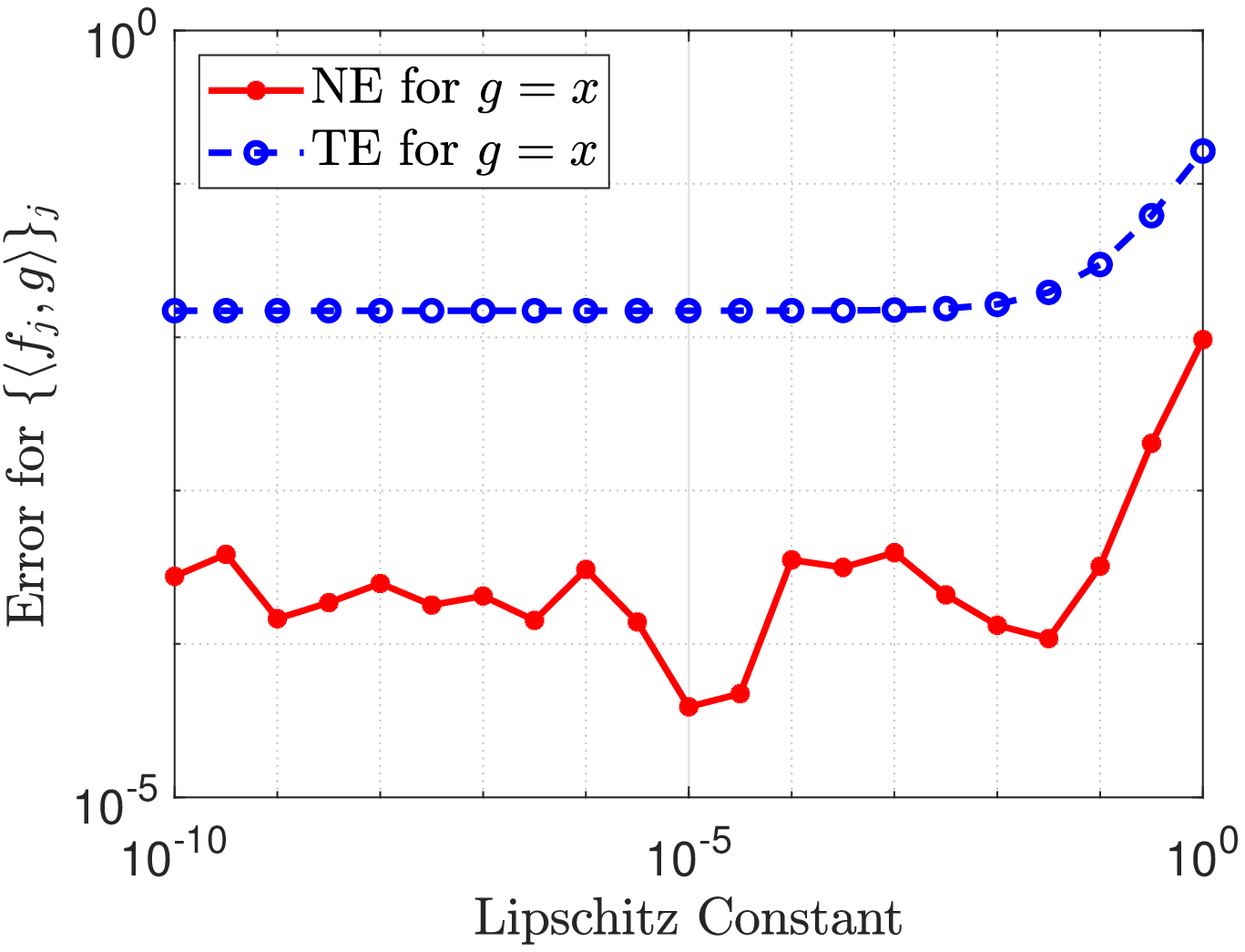}}
\subfloat
{\includegraphics[width=0.38\linewidth]{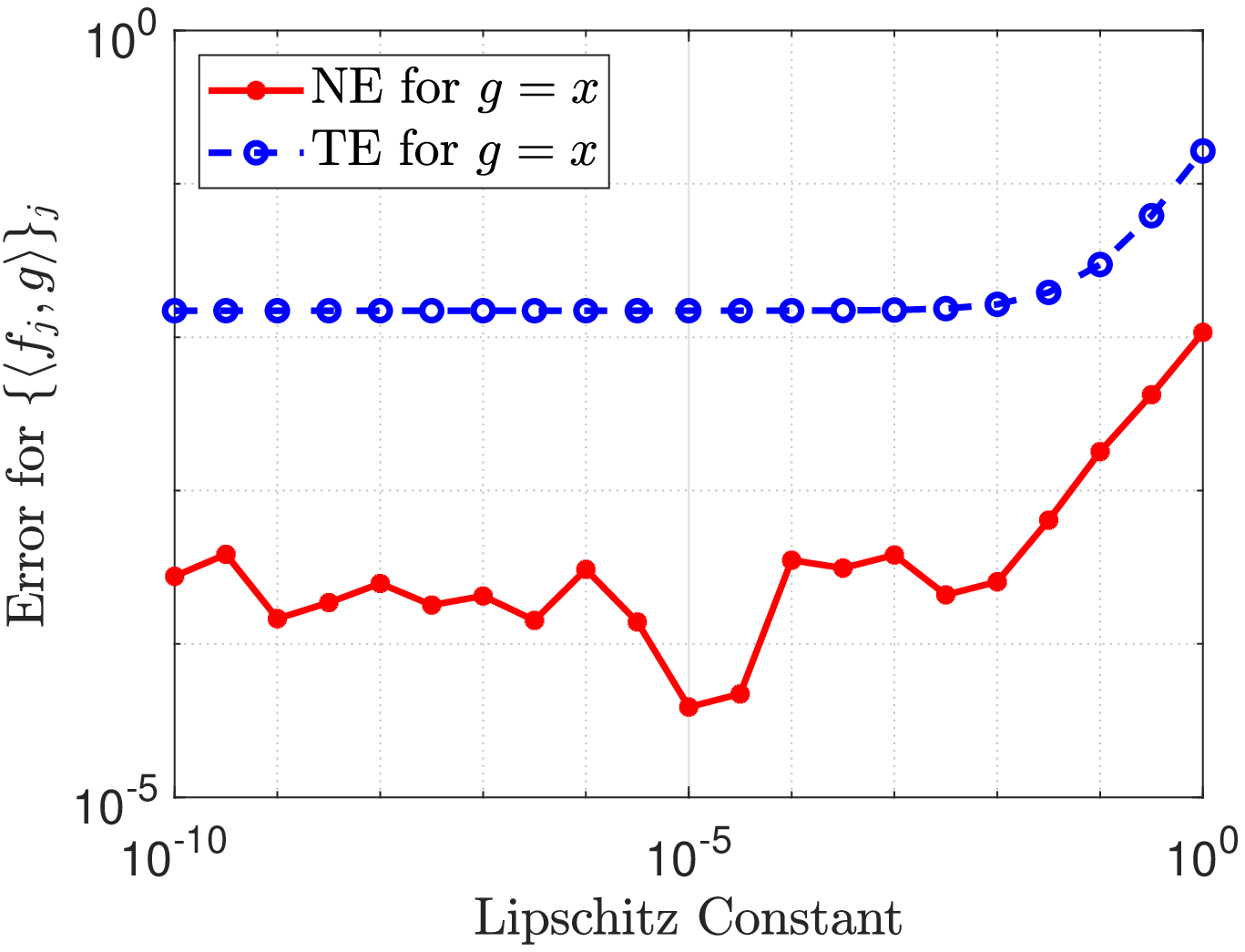}
}
\caption{ \footnotesize{The estimates of $\langle f_j,g\rangle$ \textit{vs.}  Lipschitz constant $L$ for Algorithm \ref{alg2}: $\beta=0.1$. \textbf{First column:}   $\eta(x,t)= \cos(Ltx)+C$. \textbf{Second column:}   $\eta(x,t)= x\exp(-Lt)+C$. The noise variances $\sigma$ on the measurements are $\widetilde{\sigma}=0,10^{-4}$ for the 1st, 2nd row, respectively.}}
    \label{fig:burst_vs_lip_prony}
\vspace{-0.05in}
\end{figure}

\section*{Acknowledgment}
 
The second author was supported in part by  NSF DMS \#2011140 and the Dunn Family Endowed Chair fund. The third author was supported in part by Simons Foundation Collaboration grant  244718.  

\bibliographystyle{siam}
\bibliography{refs}
 
\end{document}